\def \dd {\mathrm{d}}
\def \D {\mathrm{D}}
\def \T {\triangle}
\def \calD {\mathcal{D}}
\def \bl {\boldsymbol{\ell}}
\def \bn {\boldsymbol{n}}
\def \bm {\boldsymbol{m}}
\def \btm {\boldsymbol{\tilde{m}}}
\def \bdiff {b_{\mathrm{diff}}}
\newtheorem{thm}{Theorem}
\newtheorem{lemma}{Lemma}
\newtheorem{proposition}{Proposition}
\newtheorem*{definition}{Definition}
\begin{document}

\title{Universal Walker metrics}
\author{S. Hervik$^1$, T. M\'alek$^2$}
\address{$^1$Faculty of Science and Technology,\\     
  University of Stavanger,\\  N-4036 Stavanger, Norway}
\ead{sigbjorn.hervik@uis.no}
\address{$^2$ Institute of Mathematics of the Czech Academy of Sciences,\\ \v{Z}itn\'a 25, 115 67 Prague 1, Czech Republic}   
\ead{malek@math.cas.cz}

\begin{abstract}
We consider four-dimensional spaces of neutral signature and give examples of universal spaces of Walker type. These spaces have no analogue in other signatures in four dimensions and provide with a new class of spaces being universal. 
\end{abstract}

\pacs{04.20.Jb,04.50.-h}
\noindent{\it Keywords\/}: generalized theories of gravity, universal metrics, neutral signature

\maketitle

\section{Introduction}

In recent years, various higher-order generalizations of Einstein gravity have attracted attention to researchers \cite{Clifton:2011}. Examples of such theories are dimensionally-reduced higher-dimensional theories; theories involving other fields, or  gravity theories stemming from  Lagrangians being a polynomial curvature invariant of the Riemann curvature tensor and its derivatives; i.e., of the form 
\begin{equation}
  L = L(g_{ab}, R_{abcd}, \nabla_{a_1} R_{bcde}, \dots, \nabla_{a_1 \cdots a_p} R_{bcde}).
  \label{eq:Lagrangian}
\end{equation}
However, the complexity of such theories can be quite intricate and in general there are few  exact solutions found. It is the latter class of theories we will consider here. 

In spite of the difficulty of exact solutions, one may identify a special class of metrics that are vacuum solutions
of Einstein's gravity being immune to any correction term to the Einstein--Hilbert action.
These additional curvature invariants in the Lagrangian give rise to  additional correction
terms in the corresponding field equations which are conserved symmetric rank-2 tensors, $T_{\mu\nu}$.
This fact lead us to the formal definition of \emph{universal metrics} \cite{Coleyetal08}:

\begin{definition}[Universal metrics]
  A metric is $k$-universal if all conserved symmetric rank-2 tensors constructed
  from the metric, the Riemann tensor and its covariant derivatives up to the $k^{\mathrm{th}}$
  order are multiplies of the metric.
  If a metric is $k$-universal for all $k$, then it is simply called \emph{universal}.
\end{definition}
Hence, universal metrics are metrics which obey $T_{\mu\nu}=\lambda g_{\mu\nu}$, for all $T_{\mu\nu}$ being a variation of an action of the form (\ref{eq:Lagrangian}).
Therefore, universal metrics simultaneously solve vacuum field equations of all higher-order theories
of gravity. 
Since the Einstein tensor is an example of such a conserved symmetric rank-2 tensor,
universal metric are thus necessarily Einstein.
Note that, for $d$-dimensional Einstein metrics, it holds
\begin{equation}
  R_{ab} = \frac{R}{d}g_{ab}, \qquad
  R_{abcd} = C_{abcd} + \frac{2R}{d(d-1)} g_{a[c} g_{d]b}
\end{equation}
with $R$ being constant
and consequently
\begin{equation}
  R_{abcd;e} = C_{abcd;e}.
\end{equation}
In other words, a metric is universal, if it is Einstein and all conserved symmetric
rank-2 tensors constructed from the metric, the Weyl tensor and its derivatives
are multiples of the metric.

For Riemannian spaces (i.e., of signature $(+,+,...,+)$) such spaces were studied by Bleecker \cite{Bleecker} and it was found that all such spaces are locally homogeneous isotropy-irreducible spaces. In Lorentzian signature (i.e., of signature $(-,+,...,+)$), spaces which are universal but not locally homogeneous exist; for example, in \cite{HorSte90} it was noticed that Einstein pp-waves are vacuum solutions to any higher-order gravity theory. The Lorentzian case   has later been systematically studied \cite{Coleyetal08,HerPraPra14,univII} and all known Lorentzian examples of universal spacetimes are Kundt. Indeed, in 4 dimensions, it was proven recently that all universal metrics are Kundt \cite{HerPraPra17}. 
Note that the Kundt class is geometrically defined as metrics admitting a non-expanding, non-shearing and non-twisting
null geodetic congruence $\bl$ and the canonical form of such metrics is \cite{Kundt,Coleyetal2003,ColHerPel06,PodolskyZofka2009}
\begin{equation}
  \fl
  \dd s^2 = 2 \,\dd u \,\dd v + 2 H(u, v, x^k) \,\dd u^2 + 2 W_i(u, v, x^k) \,\dd u \,\dd x^i + g_{ij}(u, x^k) \,\dd x^i \,\dd x^j
  \label{eq:Kundt}
\end{equation}
with $\bl = \partial_v$.
In 4 dimensions, the only other possibility is the Neutral case of signature $(-,-,+,+)$ which will be the case we will consider here. Note that in dimensions higher than 4 other possibilities occur.

 We would like to mention one interesting result from \cite{HerPraPra14} which is independent of signature and dimension which will be used later:
\begin{thm}
A universal space is necessarily CSI.
\end{thm}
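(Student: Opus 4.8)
The plan is to prove the equivalent statement that \emph{every} scalar polynomial invariant $I$ of the metric, the Riemann tensor and its covariant derivatives is locally constant, i.e.\ $\nabla_a I=0$. The mechanism is to feed $I$ into the universality hypothesis through two conserved symmetric rank-$2$ tensors built out of $I$ alone. I use throughout that a universal space is Einstein, so $R_{ab}=\frac{R}{d}g_{ab}$ with $R$ constant.

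First I would record the Bochner-type commutation identity $\nabla^c\nabla_c(\nabla_a f)=\nabla_a(\nabla^c\nabla_c f)+R_{ab}\nabla^b f$ for a scalar $f$. Applied with $f=I$ and combined with the Einstein condition, it shows that the symmetric tensor
\[
  S_{ab}:=\nabla_a\nabla_b I-\Bigl(\nabla^c\nabla_c I+\frac{R}{d}I\Bigr)g_{ab}
\]
is divergence-free. Since $S_{ab}$ is visibly constructed from the metric, the Riemann tensor and its derivatives, and is of some finite order, universality forces $S_{ab}=\sigma g_{ab}$ for a function $\sigma$, and conservation of $S_{ab}$ together with $\nabla g=0$ then forces $\sigma$ to be constant. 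Taking the trace of the resulting identity $\nabla_a\nabla_b I=(\nabla^c\nabla_c I+\frac{R}{d}I+\sigma)g_{ab}$ pins down $\nabla^c\nabla_c I=-\frac{RI+d\sigma}{d-1}$; the point is that $\nabla^c\nabla_c I$ is now an \emph{affine function of $I$}, so that $(\nabla^c\nabla_c I)\,\nabla_a I$ is an exact $1$-form.

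Next I would use this to build a second conserved tensor. Putting $T_{ab}:=\nabla_a I\,\nabla_b I-\psi\,g_{ab}$ and substituting the formula for $\nabla_a\nabla_b I$ just obtained, a short computation gives $\nabla^aT_{ab}=\bigl(2\nabla^c\nabla_c I+\frac{R}{d}I+\sigma\bigr)\nabla_b I-\nabla_b\psi$, and by the previous step the bracket is a constant multiple of $\frac{R}{d}I+\sigma$, hence equals $\nabla_b$ of a polynomial in $I$. Choosing $\psi$ to be that polynomial makes $T_{ab}$ conserved, so universality gives $T_{ab}=\tau g_{ab}$ with $\tau$ constant, i.e.\ $\nabla_a I\,\nabla_b I=(\psi+\tau)g_{ab}$. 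But the left-hand side is the symmetric square of the $1$-form $\dd I$, so it has rank at most $1$, whereas $(\psi+\tau)g_{ab}$ is either identically zero or of rank $d\ge 2$; hence $\psi+\tau\equiv 0$ and $\nabla_a I\,\nabla_b I=0$, which in a nondegenerate metric forces $\nabla_a I=0$. As $I$ was arbitrary, the space is CSI.

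I expect the genuinely delicate point to be the first step: one must find exactly the right zeroth-order ``counterterms'' inside $S_{ab}$ so that the Einstein condition cancels the curvature term produced by the Bochner identity, and then verify that the constraint this forces on $\nabla^c\nabla_c I$ is precisely the one needed to complete $\nabla_a I\,\nabla_b I$ to a conserved tensor in the second step. Everything after those two algebraic identities is the elementary rank count.
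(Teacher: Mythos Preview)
The paper does not actually prove this theorem; it is quoted from \cite{HerPraPra14} as an external result and used as input for the rest of the analysis, so there is no proof here to compare against.

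That said, your argument is correct. The Bochner identity combined with the Einstein condition makes $S_{ab}$ divergence-free; universality then forces $\nabla_a\nabla_b I\propto g_{ab}$ with $\Box I$ affine in $I$, and this is precisely what is needed to complete $\nabla_aI\,\nabla_bI$ to a conserved tensor by subtracting a quadratic polynomial in $I$ times $g_{ab}$, after which the rank-one versus rank-$d$ obstruction yields $\nabla I=0$. This two-tensor mechanism is essentially the argument given in the cited reference. One cosmetic remark: in the final step you do not actually need nondegeneracy of the metric, since $\nabla_aI\,\nabla_bI=0$ as a componentwise identity already forces $\nabla I=0$.
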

\noindent CSI spaces are spaces for which all polynomial curvature invariants are constants; see e.g., \cite{ColHerPel06}. This result implies that certain coefficients of the curvature tensors can be chosen to be constants, but they need not be locally homogeneous in the Lorentzian or neutral case.

In this paper we will consider four-dimensional spaces of neutral signature and provide with examples of universal spaces which are not locally homogeneous nor Kundt. Thus these examples have no analogue in Lorentzian or Riemannian signature. The study of universal four-dimensional neutral spaces are part of a larger project aiming to classify all universal spaces regardless of signature. Indeed, in this context, it is noted that all examples of Lorentzian universal metrics have 'seed' metrics which are all Wick-rotated Riemannian universal spaces \cite{wick}. The universal property is invariant under a Wick-rotation and hence, the Wick-rotatable universal metrics seem to play a role as seed metrics in other signatures. As such Wick-rotations also can give neutral metrics, it is interesting to study the more general universal metrics in neutral signatures. In fact, one may wonder if all universal metrics are real sections of the general class of complex holomorphic universal metrics. 

Aside from the more mathematical classification issue, neutral metrics have also appeared in other geometric and  physics contexts, most notably, in twistor theory \cite{twist,twist2,twist3,twist4} and in $N=2$ string theory \cite{OoguriVafa}. Here we will focus on the Walker metrics as they appear as a separate branch in the study of degenerate metrics, for example, in the classification of four-dimensional neutral VSI spaces \cite{pseudoVSI2}. 

Our main result is given in Proposition \ref{prop:universal}, but we will first remind ourselves some of the notation and formalism used in four-dimensional neutral signature. The Walker metrics will then be introduced and our main results will then be presented in section \ref{sect:Walker}.

\subsection{Newman--Penrose formalism and boost weight decomposition in neutral signature}

The notation for spin coefficients for four-dimensional metrics of neutral signature
is adopted from \cite{Law2008}. In the neutral case, the null tetrad is a set
of four real null-vectors with the standard normalization convention
\begin{equation}
  \ell_a n^a = 1, \qquad m_a \tilde{m}^a = -1,
\end{equation}
while all other contractions vanish. Hence, the metric can be written in the form
\begin{equation}
  g_{ab} = 2 \ell_{(a} n_{b)} - 2 m_{(a} \tilde{m}_{b)}
  \label{eq:frame:metric}
\end{equation}
that is preserved under Lorentz transformations generated by null rotations
and the two independent boosts
\begin{equation}
  (\bl, \bn) \mapsto (e^{\lambda_1} \bl, e^{-\lambda_1} \bn), \qquad
  (\bm, \btm) \mapsto (e^{\lambda_2} \bm, e^{-\lambda_2} \btm).
  \label{eq:boosts}
\end{equation}
We shall say that a quantity $q$ has a boost weight ${\bf b} \equiv (b_1, b_2) \in {\mathbb Z}^2$
if it transforms under boosts \eref{eq:boosts} according to
\begin{equation}
  q \mapsto e^{b_1 \lambda_1 + b_2 \lambda_2} q.
\end{equation}
Thus, e.g., the frame vectors $\bl$, $\bn$, $\bm$ and $\btm$ have boost-weights
(1,0), ($-1$,0), (0,1) and (0,$-1$), respectively.

In the neutral case, the prime operation differs from the standard NP notation
and acts on the frame as
\begin{equation}
  (\bl)' = \bn, \quad (\bn)' = \bl, \quad (\bm)' = - \btm, \quad (\btm)' = - \bm.
\end{equation}
As a formal analogy to complex conjugation in the Lorentz case, one may consider
the application of tilde interchanging $\bm$ and $\btm$.

We decompose the covariant derivative in the frame as
\begin{equation}
  \D \equiv \ell^a \nabla_a, \quad
  \D' \equiv n^a \nabla_a, \quad
  \delta \equiv m^a \nabla_a, \quad
  \T \equiv \tilde{m}^a \nabla_a.
  \label{eq:frame:directderiv}
\end{equation}
The projection of the covariant derivative of the frame vectors onto the frame
leads to 24 independent spin coefficients denoted by $\kappa, \rho, \sigma, \tau, \alpha, \varepsilon$
and their prime, tilde and tilde-prime versions.
The spin coefficients are depicted in Figure \ref{fig:spincoeffs}
taking their corresponding boost-weights as coordinates in the boost-weight diagram. 
Note that the prime operation changes the boost-weight $(b_1, b_2)$ of a given quantity to $(-b_1, -b_2)$,
while the application of tilde to $(b_1, -b_2)$.

\begin{figure}
  \centering
  \caption{Boost-weight diagram of spin coefficients.}
  \label{fig:spincoeffs}

  \begin{tikzpicture}[scale=1]
    \tikzstyle{axes}=[]
    \tikzstyle{spincoeffs}=[only marks,mark=x,mark size=4pt,blue!90!black,thick]
    
    \begin{scope}[style=axes]
      \draw[->] (-3,0) -- (3,0) node[right] {$b_1$};
      \draw[->] (0,-3) -- (0,3) node[above] {$b_2$};
      \foreach \x/\xtext in {-2, -1, 1, 2}
        \draw[xshift=\x cm] (0pt,1pt) -- (0pt,-1pt); 
      \foreach \y/\ytext in {-2, -1, 1, 2}
        \draw[yshift=\y cm] (1pt,0pt) -- (-1pt,0pt); 
    \end{scope}
    
    \begin{scope}[style=spincoeffs]
      \draw plot coordinates{(2,1) (2,-1) (1,2) (1,0) (1,-2) (0,1) (0,-1) (-1,2) (-1,0) (-1,-2) (-2,1) (-2,-1)};
    \end{scope}

    \begin{scope}
      \draw (2.3,1) node {$\kappa$};
      \draw (2.3,-0.97) node {$\tilde\kappa$};
      \draw (1.3,2) node {$\sigma$};
      \draw (1,0.3) node {$\epsilon \ \ \rho$};
      \draw (1,-0.35) node {$\tilde\epsilon \ \ \tilde\rho$};
      \draw (1.3,-1.96) node {$\tilde\sigma$};
      \draw (0.05,1.3) node {$\tau \ \ \alpha'$};
      \draw (0.05,0.74) node {$\tilde\alpha \ \ \tilde\tau'$};
      \draw (0.05,-0.66) node {$\alpha \ \ \tau'$};
      \draw (0.05,-1.3) node {$\tilde\tau \ \ \tilde\alpha'$};
      \draw (-1.3,2.06) node {$\tilde\sigma'$};
      \draw (-0.96,0.3) node {$\epsilon' \ \ \rho'$};
      \draw (-1,-0.3) node {$\tilde\epsilon' \ \ \tilde\rho'$};
      \draw (-1.3,-1.94) node {$\sigma'$};
      \draw (-2.3,1.06) node {$\tilde\kappa'$};
      \draw (-2.3,-0.95) node {$\kappa'$};
    \end{scope}
  \end{tikzpicture}
\end{figure}
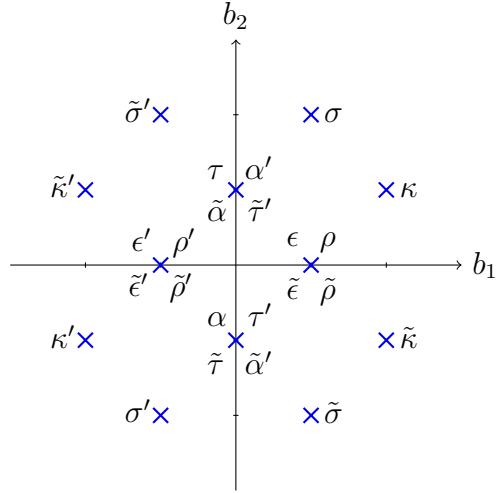

In four dimensions, the Ricci tensor has ten independent components which are encoded into
the following scalars
\begin{eqnarray}
  \Phi_{00} = \frac{1}{2} R_{ab} \ell^a \ell^b, \quad
  \Phi_{01} = \frac{1}{2} R_{ab} \ell^a m^b, \quad
  \Phi_{02} = \frac{1}{2} R_{ab} m^a m^b, \\
  \Phi_{10} = \frac{1}{2} R_{ab} \ell^a \tilde m^b, \quad
  \Phi_{11} = \frac{1}{4} R_{ab} \left( \ell^a n^b + m^a \tilde m^b \right), \quad
  \Phi_{12} = \frac{1}{2} R_{ab} n^a m^b, \\
  \Phi_{20} = \frac{1}{2} R_{ab} \tilde m^a \tilde m^b, \quad
  \Phi_{21} = \frac{1}{2} R_{ab} n^a \tilde m^b, \quad
  \Phi_{22} = \frac{1}{2} R_{ab} n^a n^b
\end{eqnarray}
and
\begin{equation}
  \Pi = \frac{R}{24}.
\end{equation}
Possible components of the Ricci tensor (and in fact of any rank-2 tensor) are only of boost-weights
$(0, 0)$, $(\pm 1, \pm 1)$, $(\pm 2, 0)$, $(0, \pm 2)$ and therefore they form a diamond-like shape
in the boost-weight diagram in Figure \ref{fig:Ricci}.

\begin{figure}
  \centering
  \caption{Boost-weight diagram of the Ricci tensor frame components.}
  \label{fig:Ricci}
  
  \begin{tikzpicture}[scale=1]
    \tikzstyle{axes}=[]
    \tikzstyle{ricci}=[only marks,mark=triangle,mark size=4pt,red!90!black,thick]
    
    \begin{scope}[style=axes]
      \draw[->] (-3,0) -- (3,0) node[right] {$b_1$};
      \draw[->] (0,-3) -- (0,3) node[above] {$b_2$};
      \foreach \x/\xtext in {-2, -1, 1, 2}
        \draw[xshift=\x cm] (0pt,1pt) -- (0pt,-1pt) node[below,fill=white] {$\xtext$};
      \foreach \y/\ytext in {-2, -1, 1, 2}
        \draw[yshift=\y cm] (1pt,0pt) -- (-1pt,0pt) node[left,fill=white] {$\ytext$};
    \end{scope}
    
    \begin{scope}[style=ricci]
      \draw plot coordinates{(2,0) (1,1) (1,-1) (0,2) (0,0) (0,-2) (-1,1) (-1,-1) (-2,0)};
    \end{scope}

    \begin{scope}
      \draw (2.4,0.3) node {$\Phi_{00}$};
      \draw (1.5,1) node {$\Phi_{01}$};
      \draw (1.5,-1) node {$\Phi_{10}$};
      \draw (0.5,2) node {$\Phi_{02}$};
      \draw (0.15,0.3) node {$\Pi$ \ $\Phi_{11}$};
      \draw (0.5,-2) node {$\Phi_{20}$};
      \draw (-1.5,1) node {$\Phi_{12}$};
      \draw (-1.5,-1) node {$\Phi_{21}$};
      \draw (-2.4,0.3) node {$\Phi_{22}$};
    \end{scope}
  \end{tikzpicture}
\end{figure}
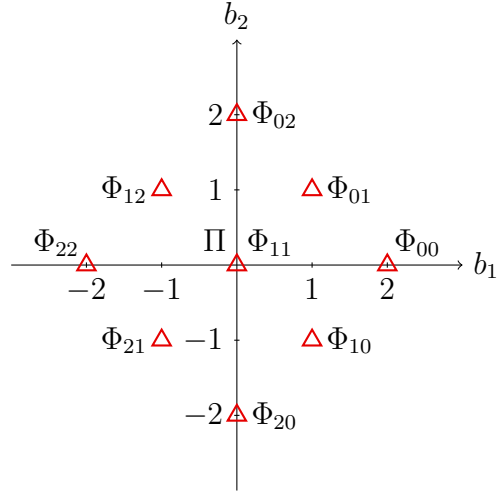

Ten independent components of the Weyl tensor are denoted by
\begin{eqnarray}
  \Psi_0 = C_{abcd} \ell^a m^b \ell^c m^d, \label{eq:Psi0} \\
  \Psi_1 = C_{abcd} \ell^a m^b \ell^c n^d, \\
  \Psi_2 = C_{abcd} \ell^a m^b \tilde{m}^c n^d, \label{eq:Psi2} \\
  \Psi_3 = C_{abcd} \ell^a n^b \tilde{m}^c n^d, \\
  \Psi_4 = C_{abcd} n^a \tilde{m}^b n^c \tilde{m}^d \label{eq:Psi4}
\end{eqnarray}
and their $\tilde\Psi_i$ counterparts obtained by applying tilde.
The Weyl tensor components form a cross-like shape in the boost-weight diagram in Figure \ref{fig:Weyl},
where the self-dual components $\Psi_i$ lie on the anti-diagonal,
while the anti-self-dual components $\tilde \Psi_i$ are placed on the diagonal.

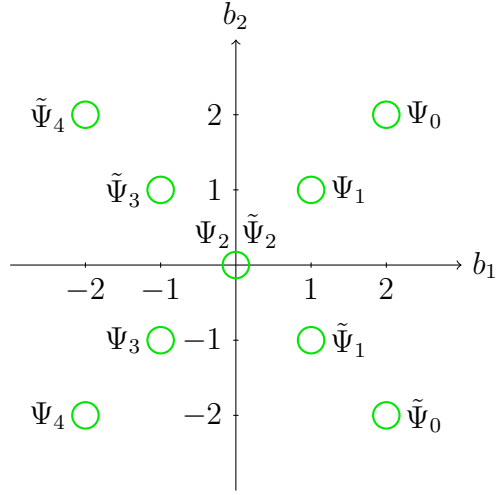
\begin{figure}
  \centering
  \caption{Boost-weight diagram of the Weyl tensor frame components.}
  \label{fig:Weyl}
  
  \begin{tikzpicture}[scale=1]
    \tikzstyle{axes}=[]
    \tikzstyle{weyl}=[only marks,mark=o,mark size=5pt,green!90!black,thick]
    
    \begin{scope}[style=axes]
      \draw[->] (-3,0) -- (3,0) node[right] {$b_1$};
      \draw[->] (0,-3) -- (0,3) node[above] {$b_2$};
      \foreach \x/\xtext in {-2, -1, 1, 2}
        \draw[xshift=\x cm] (0pt,1pt) -- (0pt,-1pt) node[below,fill=white] {$\xtext$};
      \foreach \y/\ytext in {-2, -1, 1, 2}
        \draw[yshift=\y cm] (1pt,0pt) -- (-1pt,0pt) node[left,fill=white] {$\ytext$};
    \end{scope}
    
    \begin{scope}[style=weyl]
      \draw plot coordinates{(2,2) (2,-2) (1,1) (1,-1) (0,0) (-1,1) (-1,-1) (-2,2) (-2,-2)};
    \end{scope}

    \begin{scope}
      \draw (2.5,2) node {$\Psi_0$};
      \draw (2.5,-2) node {$\tilde\Psi_0$};
      \draw (1.5,1) node {$\Psi_1$};
      \draw (1.5,-1) node {$\tilde\Psi_1$};
      \draw (0,0.45) node {$\Psi_2 \ \tilde\Psi_2$};
      \draw (-1.5,1) node {$\tilde\Psi_3$};
      \draw (-1.5,-1) node {$\Psi_3$};
      \draw (-2.5,2) node {$\tilde\Psi_4$};
      \draw (-2.5,-2) node {$\Psi_4$};
    \end{scope}
  \end{tikzpicture}
\end{figure}

Moreover, the NP formalism consists of three sets of equations.
The commutator relations and the Ricci identities can be found in \cite{Law2008}.
The full set of the Bianchi identities is given in \ref{appendix:bianchi}.

\section{Walker metrics}
\label{sect:Walker}
Metrics possessing invariant null $n$-planes with $n \le d / 2$
have been studied by Walker in \cite{Walker1950}.
We consider the case $d = 4$ with null planes of maximum dimensionality,
where in a real space the signature of metric is necessarily neutral.
The canonical form of four-dimensional Walker metrics admitting a field of parallel null 2-planes is \cite{Walker1950}
\begin{equation}
  \dd s^2 = 2 \dd u (\dd v + A \dd u + C \dd U) + 2 \dd U (\dd V + B \dd U),
  \label{eq:Walker_metric}
\end{equation}
where $A$, $B$ and $C$ are arbitrary functions.
Hence, the parallel null 2-planes are spanned by the null vectors $\partial_v$, $\partial_V$
and one can choose a natural real null frame
\begin{equation}
  \eqalign{
  \ell^a \partial_a = \partial_v, \quad
    n^a \partial_a = \partial_u - A \partial_v - \frac{1}{2} C \partial_V, \\
  \tilde{m}^a \partial_a = \partial_V, \quad
    m^a \partial_a = - \partial_U + B\partial_V + \frac{1}{2} C \partial_v.}
  \label{eq:Walker:frame}
\end{equation}
The spin coefficients can be immediately obtained by expressing
the directional derivatives of the frame vectors and comparing with
\eref{eq:Walker:derframevectors:first}--\eref{eq:Walker:derframevectors:last}.
It follows that most of the spin coefficients for the Walker metrics vanish
\begin{equation}
  \kappa = \rho = \tilde\kappa = \tilde\rho = \tilde\sigma = \tilde\tau = \sigma' = \tau' = \tilde\rho' = \tilde\tau' = \varepsilon = \tilde\varepsilon = \alpha = \tilde\alpha' = 0
  \label{eq:Walker:spincoeff:vanishing}
\end{equation}
and the ten remaining ones read
\begin{eqnarray}
  \sigma = -B_{,v}, \quad
  \tau = \frac{1}{2} C_{,v}, \quad
  \kappa' = - A_{,V}, \quad
  \rho' = - \frac{1}{2} C_{,V}, \label{eq:Walker:spincoeff:first} \\
  \tilde\kappa' = -B A_{,V} + A_{,U} - \frac{1}{2} C A_{,v} + \frac{1}{2} A C_{,v} - \frac{1}{2} C_{,u} + \frac{1}{4} C C_{,V}, \\
  \tilde\sigma' = A B_{,v} - B_{,u} + \frac{1}{2} C B_{,V} - \frac{1}{2} B C_{,V} + \frac{1}{2} C_{,U} - \frac{1}{4} C C_{,v}, \\
  \varepsilon' = \frac{1}{4} C_{,V} - \frac{1}{2} A_{,v}, \quad
  \tilde\varepsilon' = - \frac{1}{4} C_{,V} - \frac{1}{2} A_{,v}, \\
  \tilde\alpha = - \frac{1}{2} B_{,V} - \frac{1}{4} C_{,v}, \quad
  \alpha' = \frac{1}{2} B_{,V} - \frac{1}{4} C_{,v}. \label{eq:Walker:spincoeff:last}
\end{eqnarray}

Inspecting the Ricci identities \cite{Law2008} which are significantly simplified for the Walker metrics,
we are able to easily express the components of the trace-free Ricci tensor
\begin{eqnarray}
  \Phi_{00} = \Phi_{10} = \Phi_{20} = 0, \quad
  \Phi_{22} = - \T\tilde\kappa', \quad
  \Phi_{02} = \D\tilde\sigma', \quad
  \Phi_{01} = \D\tilde\alpha, \label{eq:Walker:Ricci:first} \\
  \Phi_{21} = -\T\tilde\varepsilon', \quad
  \Phi_{11} = \frac{\T\tilde\alpha - \D\tilde\varepsilon'}{2}, \quad
  \Phi_{12} = \frac{\T\tilde\sigma' - \D\tilde\kappa'}{2} \label{eq:Walker:Ricci:last}
\end{eqnarray}
and the Weyl tensor
\begin{eqnarray}
  \Psi_0 = - \D\sigma, \quad
  \Psi_1 = \D\alpha', \quad
  \Psi_2 = \frac{\T(\alpha' - \tau) - \D\varepsilon'}{3}, \label{eq:Walker:Weyl:first} \\
  \Psi_3 = -\T\varepsilon', \quad
  \Psi_4 = - \T\kappa', \\
  \tilde\Psi_0 = \tilde\Psi_1 = 0, \quad
  \tilde\Psi_2 = - \frac{\D\tilde\varepsilon' + \T\tilde\alpha}{3} = \frac{R}{12}, \quad
  \tilde\Psi_3 = - \frac{\D\tilde\kappa' + \T\tilde\sigma'}{2} \\
  \tilde\Psi_4 = 2(\tilde\sigma' \tilde\varepsilon' - \tilde\kappa' \tilde\alpha) - \delta\tilde\kappa' - \D'\tilde\sigma',
  \label{eq:Walker:Weyl:last}
\end{eqnarray}
Using \eref{eq:Walker:spincoeff:first}--\eref{eq:Walker:spincoeff:last},
these components can straightforwardly be rewritten in terms of the metric functions $A$, $B$, $C$
as given in \eref{eq:Walker:Phi00-11}--\eref{eq:Walker:tildePsi4}.

It is convenient to split the Weyl tensor $C = C^+ + C^-$ of a general Walker metric
to its self-dual part
\begin{equation}
  C^+ = C^{(-2,-2)} + C^{(-1,-1)} + C^{(0,0)^+} + C^{(1,1)} + C^{(2,2)},
  \label{eq:Walker:Weyl:SDpart}
\end{equation}
and anti-self-dual part
\begin{equation}
  C^- = C^{(-2,2)} + C^{(-1,1)} + C^{(0,0)^-}.
\end{equation}
The parts of the corresponding boost-weights are then given by
\begin{eqnarray}
  \fl
  C_{abcd}^{(2,2)} = 4 \Psi_0 \, n_{\{a} \tilde m_b n_c \tilde m_{d\}}, \label{eq:C(2,2)} \\
  \fl
  C_{abcd}^{(1,1)} = - 8 \Psi_1 \left( n_{\{a} \ell_b n_c \tilde m_{d\}} + n_{\{a} \tilde m_b m_c \tilde m_{d\}} \right), \\
  \fl
  C_{abcd}^{(0,0)^+} = 4 \Psi_2 \left( 2 \ell_{\{a} m_b \tilde m_c n_{d\}} - 2 \ell_{\{a} n_b m_c \tilde{m}_{d\}}
	      + \ell_{\{a} n_b \ell_c n_{d\}} + m_{\{a} \tilde{m}_b m_c \tilde{m}_{d\}} \right), \\
  \fl
  C_{abcd}^{(-1,-1)} = - 8 \Psi_3 \left( \ell_{\{a} n_b \ell_c m_{d\}} + \ell_{\{a} m_b \tilde m_c m_{d\}} \right), \\
  \fl
  C_{abcd}^{(-2,-2)} = 4 \Psi_4 \, \ell_{\{a} m_b \ell_c m_{d\}}, \label{eq:C(-2,-2)} \\
  \fl
  C_{abcd}^{(0,0)^-} = 4 \tilde\Psi_2 \left( 2 \ell_{\{a} \tilde m_b m_c n_{d\}} - 2 \ell_{\{a} n_b \tilde m_c m_{d\}}
    + \ell_{\{a} n_b \ell_c n_{d\}} + m_{\{a} \tilde{m}_b m_c \tilde{m}_{d\}} \right), \label{eq:C(0,0)-} \\
  \fl
  C_{abcd}^{(-1,1)} = - 8 \tilde\Psi_3 \left( \ell_{\{a} n_b \ell_c \tilde m_{d\}}
    + \ell_{\{a} \tilde m_b m_c \tilde m_{d\}} \right), \\
  \fl
  C_{abcd}^{(-2,2)} = 4 \tilde\Psi_4 \, \ell_{\{a} \tilde m_b \ell_c \tilde m_{d\}}, \label{eq:C(-2,2)}
\end{eqnarray}
where we introduce the operation \{\} defined as:
\begin{equation}
  u_{\{a} v_b w_c z_{d\}} \equiv \case{1}{2} (u_{[a} v_{b]} w_{[c} z_{d]} + u_{[c} v_{d]} w_{[a} z_{b]}).
\end{equation}
Note that $C^{(-2,2)}$, $C^{(-1,1)}$ and $C^{(0,0)^-}$ can be obtained from
$C^{(-2,-2)}$, $C^{(-1,-1)}$ and $C^{(0,0)^+}$, respectively, by using the tilde operation.

The non-vanishing spin coefficients and the frame components of the Ricci and Weyl tensors
of general Walker metrics \eref{eq:Walker_metric} are depicted in \fref{fig:Walker_metric}
according to their boost weights.
\begin{figure}
  \centering
  \caption{Boost weight diagram of a general Walker metric. The components of the Weyl tensor,
  Ricci tensor and spin coefficients are denoted by circles, triangles and crosses, respectively.}  
  \begin{tikzpicture}[scale=1]
    \tikzstyle{axes}=[]
    \tikzstyle{weyl}=[only marks,mark=o,mark size=5pt,green!90!black,thick]
    \tikzstyle{ricci}=[only marks,mark=triangle,mark size=4pt,red!90!black,thick]
    \tikzstyle{spincoeffs}=[only marks,mark=x,mark size=4pt,blue!90!black,thick]
    \begin{scope}[style=axes]
      \draw[->] (-3,0) -- (3,0) node[right] {$b_1$};
      \draw[->] (0,-3) -- (0,3) node[above] {$b_2$};
      \foreach \x/\xtext in {-2, -1, 1, 2}
        \draw[xshift=\x cm] (0pt,1pt) -- (0pt,-1pt) node[below,fill=white] {$\xtext$};
      \foreach \y/\ytext in {-2, -1, 1, 2}
        \draw[yshift=\y cm] (1pt,0pt) -- (-1pt,0pt) node[left,fill=white] {$\ytext$};
    \end{scope}
    \begin{scope}[style=weyl]
      \draw plot coordinates{(2,2) (1,1) (0,0) (-1,1) (-1,-1) (-2,2) (-2,-2)};
    \end{scope}
    \begin{scope}[style=ricci]
      \draw plot coordinates{(1,1) (0,2) (0,0) (-1,1) (-1,-1) (-2,0)};
    \end{scope}
    \begin{scope}[style=spincoeffs]
      \draw plot coordinates{(1,2) (0,1) (-1,2) (-1,0) (-2,1) (-2,-1)};
    \end{scope}
  \end{tikzpicture}
  \label{fig:Walker_metric}
\end{figure}
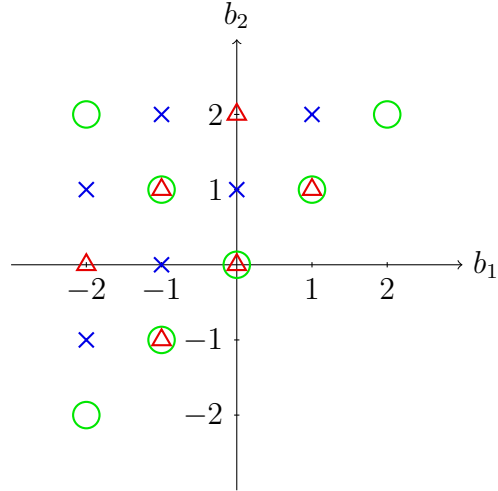

\subsection{Einstein CSI Walker metrics}
\label{section:EinsteinCSI}

Since universal metrics are necessarily Einstein, i.e.\ the trace-free Ricci tensor vanishes
and the Ricci scalar is constant, from now on we assume
\begin{equation}
  R_{ab} = \lambda g_{ab}.
  \label{eq:Einstein_space}
\end{equation}

Let us also elaborate on the assumption put on the components $\Psi_i$ and $\tilde{\Psi}_2$.
As mentioned in the introduction, universal spaces are necessarily CSI. The Weyl components
of CSI spaces can be of the following types. There exists a frame so that: 
\begin{enumerate}
\item All $\Psi_i$ and $\tilde{\Psi}_i$ are constants. 
\item $\Psi_i$ are constants, $\tilde{\Psi}_2$ is constant, and $\tilde{\Psi}_1=\tilde{\Psi}_0=0$. 
\item $\tilde{\Psi}_i$ are constants, ${\Psi}_2$ is constant, and ${\Psi}_1={\Psi}_0=0$. 
\item $\Psi_2$ and $\tilde{\Psi}_2$ are constant, and ${\Psi}_1={\Psi}_0=\tilde{\Psi}_1=\tilde{\Psi}_0=0$. 
\end{enumerate}
The Walker metrics we consider here, are therefore in category 2. The first category is reminiscent (but not exclusive) of locally homogeneous space while Kundt spaces are in category 4. 

These assumptions on the components of the Ricci and Weyl tensors allow us to fully determine
the $v$- and $V$-dependence of the metric functions $A$, $B$ and $C$ by integrating
\eref{eq:Walker:Phi00-11} and \eref{eq:Walker:Phi01,21}
along with \eref{eq:Walker:Psi0-2}--\eref{eq:Walker:tildePsi0-2}. Thus, we obtain
\begin{eqnarray}
  A = \frac{1}{2} \Psi_4 V^2 + \Psi_3 v V + \left( \frac{\lambda}{3} + \frac{\Psi_2}{2} \right) v^2 + A_{10} v + A_{01} V + A_{00}, \\
  B = \frac{1}{2} \Psi_0 v^2 + \Psi_1 v V + \left( \frac{\lambda}{3} + \frac{\Psi_2}{2} \right) V^2 + B_{10} v + B_{01} V + B_{00}, \\
  C = - \Psi_3 V^2 - \Psi_1 v^2 + 2 \left( \frac{\lambda}{3} - \Psi_2 \right) v V + C_{10} v + C_{01} V + C_{00},
\end{eqnarray}
where $A_{10}$, $A_{01}$, $A_{00}$, $B_{10}$, $B_{01}$, $B_{00}$, $C_{10}$, $C_{01}$, and $C_{00}$
are arbitrary functions of $u$, $U$.

However, the conditions that universal metrics are Einstein and CSI are only necessary, 
but not sufficient, and we still have to analyze the components of the Weyl tensor
and check all their possible contributions to the conserved symmetric rank-2 tensors.

For convenience, one may introduce the following definition:

\begin{definition}
  A boost weight difference $\bdiff(q)$ of a quantity $q$ with boost weights $b = (b_1, b_2)$
  is the difference $\bdiff(q) = b_2 - b_1$.
\end{definition}

Boost weights are additive and hence the boost weight difference of the product of scalars
$\eta \mu$ is the sum $\bdiff(\eta) + \bdiff(\mu)$.

Let us now mention some useful implications of the boost-weight decomposition for rank-2 tensors.
Components of a general rank-2 tensor are only of b.w.\
$(0,0)$, $(\pm 1, \pm1)$, $(\pm 2,0)$ and $(0, \pm 2)$.
Therefore, these components are only of boost-weight difference $-2$, 0 or 2 and   
form a diamond-like shape in the boost-weight diagram, as in the case of Ricci tensor in \fref{fig:Ricci}.
Now, consider rank-2 tensors polynomial in a tensor $T$.
Due to the additivity of boost-weights,
if $T$ has only components with $\bdiff > 2$, any tensor polynomial in $T$ has only components with $\bdiff > 2$
and thus $T$ does not contribute to the rank-2 tensors.
Similarly, if the components of $T$ are of $\bdiff > 1$ or $\bdiff > 0$,
the rank-2 tensors are at most linear or quadratic in $T$, respectively.

As we already pointed out, if a tensor $T$ has, for example, only components with $\bdiff = 3$,
it does not contribute to the rank-2 tensors. However, the covariant derivative of $T$ in principle
could generate components of $\bdiff = 2$ which contribute. Specifically, for a given component $\eta$
of boost weights $(b_1, b_2)$, the application of the covariant derivative yields components
$\D\eta$, $\T\eta$, $\D'\eta$ and $\delta\eta$ of boost weights $(b_1+1,b_2)$, $(b_1,b_2-1)$, $(b_1-1,b_2)$
and $(b_1,b_2+1)$, respectively. Obviously, $\bdiff(\D\eta) = \bdiff(\T\eta) = \bdiff(\eta) - 1$
and $\bdiff(\D'\eta) = \bdiff(\delta\eta) = \bdiff(\eta) + 1$.
Therefore the covariant derivative of a tensor $T$ of arbitrary order,
could give rise in general to components of arbitrary $\bdiff$ and thus to the infinite number
of possible contributions to the rank-2 tensors. 
On the other hand, it turns out that in special cases, the values of $\bdiff$ of these components are restricted.
Before we show this using balanced scalars, it is convenient to introduce the following notation:

\begin{definition}
  We use the symbol $\calD$ as a derivative operator representing
  both directional derivatives $\D$ and $\T$,
  i.e.\ $\calD \eta = 0$ means that $\D\eta = 0$ and simultaneously $\T\eta = 0$.
  Similarly, $\calD'$ represents both $\D'$ and $\delta$.
\end{definition}

Note that $\D$ and $\T$ commutes 
and thus, for example,
$\calD^k \eta = 0$ corresponds to $k + 1$ equations $\D^p\T^q \eta = 0$, where $p + q = k$.

Inspired by the balanced scalar approach employed in the Lorentzian case in
\cite{PravdaPravdovaColeyMilson2002}, we introduce the notion of dual-balance
reflecting the symmetry of neutral signature Walker metrics \eref{eq:Walker_metric}.

\begin{definition}
  We say that a scalar $\eta$ is $k$-dual-balanced if
  \begin{eqnarray}
    \eta = 0 \quad\textrm{for}\quad \bdiff(\eta) \le k, \label{def:balanced:cond1} \\
    \mathcal{D}^{\bdiff(\eta) - k} \eta = 0 \quad\textrm{for}\quad \bdiff(\eta) > k. \label{def:balanced:cond2}
  \end{eqnarray}
  A tensor is $k$-dual-balanced if all its components are $k$-dual-balanced scalars.
\end{definition}

Now, let us inspect the Weyl tensor. As a direct consequence of Lemma \ref{lemma:covd} from \ref{appendix:balanced},
if a tensor is 2-dual-balanced, the covariant derivative of any order of this tensor 
is 2-dual-balanced as well. In other words, for a given 2-dual-balanced tensor,
all components of this tensor and its covariant derivative of arbitrary order
are of boost-weight difference $\bdiff > 2$ and therefore they do not contribute
to any rank-2 tensor. However, not all of the components \eref{eq:C(2,2)}--\eref{eq:C(-2,2)}
of the Weyl tensor of Einstein Walker metrics with a constant self-dual part
are 2-dual-balanced.

For Einstein Walker metrics, $C^{(-2,2)}$ is 2-dual-balanced since $\bdiff(\tilde\Psi_4) = 4$
and \eref{eq:Walker:EinsteinASD:DtildePsi4} holds. From \eref{eq:Walker:EinsteinASD:DtildePsi3}
along with $\bdiff(\tilde\Psi_3) = 2$, it follows that $C^{(-1,1)}$ is 1-dual-balanced.
Obviously, $C^{(0,0)^-}$ and the self-dual part \eref{eq:Walker:Weyl:SDpart}
of the Weyl tensor with constant $\Psi_i$ are $-1$-dual-balanced. On the other hand,
\begin{equation}
  C_{abcd;e}^{(0,0)^-} = 24 \tilde\Psi_2 (\tilde\sigma' \tilde m_e - \tilde\kappa' \ell_e) \left( \ell_{\{a} n_b \ell_c \tilde m_{d\}} + \ell_{\{a} \tilde m_b m_c \tilde m_{d\}} \right)
  \label{eq:Einstein:nablaC(0,0)-}
\end{equation}
has only components of $\bdiff = 3$. Using
\eref{eq:Walker:EinsteinASD:derspincoeff:first}--\eref{eq:Walker:EinsteinASD:DtildePsi3}, it holds that
$\calD^2(\tilde\kappa' \tilde\Psi_2) = \calD^2(\tilde\sigma' \tilde\Psi_2) = 0$
implying $\nabla C^{(0,0)^-}$ is 1-dual-balanced. Therefore, the Weyl tensor
of Einstein Walker metrics with a constant self-dual part is in general
$-1$-dual-balanced and thus $\nabla^{(k)}C$ has no components of negative boost-weight difference.

Since $C^{(-2,2)}$ is 2-dual-balanced, $C^{(-2,2)}$ and its derivatives contain
components of $\bdiff > 2$ and thus they do not contribute to rank-2 tensors
constructed from the Weyl tensor and its derivatives of arbitrary order.
Similarly, $\nabla^{(k)}C^{(-1,1)}$ with $k \ge 0$ and $\nabla^{(k)}C^{(0,0)^-}$
with $k > 0$ contain components of $\bdiff > 1$ implying that the contributions
to such rank-2 tensors are at most linear in $\nabla^{(k)}C^{(-1,1)}$ with
$k \ge 0$ or $\nabla^{(k)}C^{(0,0)^-}$ with $k > 0$. However, any number of
$C^{(0,0)} = C^{(0,0)^+} + C^{(0,0)^-}$ and many combinations of $\nabla^{(k)}C^+$
can contribute to such rank-2 tensors. All the potential non-vanishing contributions
are treated in detail in the following sections.

\subsection{0-universal Walker metrics}
\label{section:0universalWalker}

Before we proceed to identify classes of universal Einstein Walker metrics with 
a constant self-dual part of the Weyl tensor, we focus on symmetric rank-2 tensors
constructed from the metric and the Weyl tensor {\em without} derivatives,
i.e.\ on 0-universal metrics.

First of all, let us consider a discrete frame transformation
\begin{equation}
  (\bn, \bl) \leftrightarrow (-\btm, \bm).
  \label{eq:pair_interchange2}
\end{equation}
Under the interchange of the pairs of frame vectors \eref{eq:pair_interchange2},
the frame normalization and thus the form of the metric \eref{eq:frame:metric}
is preserved. However, the frame components and the corresponding parts of the Weyl
tensor may be altered by \eref{eq:pair_interchange2} since they depend on the choice
of the frame. As can be seen directly from \eref{eq:Psi0}--\eref{eq:Psi4},
$\Psi_0$, $\Psi_2$, $\tilde\Psi_2$, $\Psi_4$ are invariant and
\begin{equation}
  \Psi_1 \mapsto -\Psi_1, \quad \Psi_3 \mapsto -\Psi_3.
\end{equation}
Subsequently, all $C^{(b,b)}$ with $b \in \{-2,\dots,2\}$ given by \eref{eq:C(2,2)}--\eref{eq:C(0,0)-}
remain unchanged.

The pairs of frame vectors may also be interchange as
\begin{equation}
  (\bn, \bl) \leftrightarrow (-\bm, \btm),
  \label{eq:pair_interchange}
\end{equation}
then $\Psi_2$, $\tilde\Psi_2$ are invariant and
\begin{equation}
  \Psi_0 \leftrightarrow \Psi_4, \quad \Psi_1 \leftrightarrow \Psi_3,
  \quad \tilde\Psi_3 \mapsto - \tilde\Psi_3.
\end{equation}

Several useful implications of \eref{eq:pair_interchange2} and \eref{eq:pair_interchange}
for rank-2 tensors can be found in \ref{appendix:tranfs}. These lemmas allow us to rule out 
potentially non-vanishing components of symmetric rank-2 tensors constructed from the metric
and the Weyl tensor without derivatives.

As already mentioned above, since the Weyl
tensor is $-1$-dual-balanced, it has no components of negative boost-weight
difference and therefore only components of $\bdiff = 0$ and $\bdiff = 2$ may
appear in such rank-2 tensors. It remains to show that b.w.\ (0,0) parts of the
rank-2 tensors are proportional to the metric and b.w.\ (1,1), $(-1,-1)$, $(-1,1)$,
(0,2) and $(-2,0)$ parts vanish.

The components of $\bdiff = 0$ only consist of $\Psi_i$ and $\tilde\Psi_2$,
i.e.\ they arise from the self-dual part \eref{eq:Walker:Weyl:SDpart} and $C^{(0,0)^-}$.
Hence, the cases to consider are:
\begin{itemize}
  \item b.w.\ (0,0):
    The rank-2 tensors take one of the forms
    \begin{equation*}
      \fl
      C_{\circ\circ\circ\circ}^{(b_1,b_1)} \cdots C_{\circ\circ\circ\circ}^{(b_n,b_n)} g_{ab}, \quad
      C_{a\circ b\circ}^{(b_1,b_1)} \cdots C_{\circ\circ\circ\circ}^{(b_n,b_n)}, \quad
      C_{a\circ\circ\circ}^{(b_1,b_1)} C_{b \circ\circ\circ}^{(b_2,b_2)} \cdots C_{\circ\circ\circ\circ}^{(b_n,b_n)},
    \end{equation*}
    where all indices ``$\circ$'' are contracted.
    Since all $\Psi_i$ and $\tilde\Psi_2$ are constant, such tensors are proportional to the metric with a constant factor
    as it is obvious for the first form and follows from Lemma
    \ref{lemma:C(0,0)^k} for the remaining two forms.
  \item b.w.\ (1,1) and ($-1$,$-1$):
    These components consist of only $\Psi_1$ or $\Psi_3$, respectively,
    and any combination of b.w.\ (0,0) blocks \eref{eq:bw(0,0)_blocks},
    therefore, they change the sign under \eref{eq:pair_interchange2}.
    Then, Lemma \ref{lemma:T(1,1)} implies that b.w.\ (1,1) and ($-1$,$-1$) parts
    of a symmetric rank-2 tensor vanish.
\end{itemize}
The components of $\bdiff = 2$ arise from one $C^{(-1,1)}$ and certain combinations of 
self-dual part \eref{eq:Walker:Weyl:SDpart} and $C^{(0,0)^-}$.
\begin{itemize}
  \item b.w.\ ($-1$,1):
    The components of b.w.\ ($-1$,1) are made only of $\tilde\Psi_3$ and any combination
    of b.w.\ (0,0) blocks \eref{eq:bw(0,0)_blocks} which are invariant under
    \eref{eq:pair_interchange} except $\Psi_1^2\Psi_4 \leftrightarrow \Psi_0\Psi_3^2$.
    Hence, we require $\Psi_1^2\Psi_4 = \Psi_0\Psi_3^2$ to employ Lemma \ref{lemma:T(-1,1)} implying
    that the b.w.\ ($-1$,1) part of symmetric rank-2 tensors vanishes.
  \item b.w.\ (0,2) and ($-2$,0):
    These parts of a rank-2 tensor $T_{ab}$ are symmetric and their components involve only
    $\Psi_1 \tilde\Psi_3$ or $\Psi_3 \tilde\Psi_3$, respectively, and any combination
    of b.w.\ (0,0) blocks \eref{eq:bw(0,0)_blocks}. These b.w.\ (0,0) blocks are, as a result of the proof of Lemma \ref{lemma:C(0,0)^k}, invariant under \eref{eq:pair_interchange}, as well as the boosts. Hence, there is a transitive action group on the tangent space leaving the b.w.\ (0,0) blocks invariant. This implies that the b.w.\ (0,0) blocks  $T_{22} = T_{ab} \, m^a m^b$ and
    and $T_{11} = T_{ab} \, n_a n_b$ are equal. 

    Since $T_{11}$ and $T_{22}$ have the same b.w.\ (0,0) blocks that are invariant under
    \eref{eq:pair_interchange} if we assume $\Psi_1^2\Psi_4 = \Psi_0\Psi_3^2$,
    $\tilde\Psi_3$ changes the sign and $\Psi_1 \leftrightarrow \Psi_3$,
    it holds that $T_{11} \leftrightarrow - T_{22}$.
    Then, it immediately follows from Lemma \ref{lemma:T(0,2)} that these components vanish.
\end{itemize}

Therefore, we arrive at:
\begin{proposition}
  Einstein Walker metrics with a constant self-dual part of the Weyl tensor are
  0-universal if $\tilde\Psi_3 = 0$ or $\Psi_1^2 \Psi_4 = \Psi_0 \Psi_3^2$.
\end{proposition}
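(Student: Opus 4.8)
The plan is to organize the proof around the boost-weight-difference bookkeeping that has already been set up, and to reduce each potentially obstructing component of a general symmetric rank-2 tensor $T_{ab}$ built polynomially from $g$ and the Weyl tensor (without derivatives) to one of the transformation lemmas in the appendices. First I would recall from the Einstein CSI discussion that the Weyl tensor is $-1$-dual-balanced, so its only components with $\bdiff\le 2$ are those of $\bdiff=0$ (namely $\Psi_0,\Psi_1,\Psi_2,\Psi_3,\Psi_4,\tilde\Psi_2$, sitting in $C^+$ and $C^{(0,0)^-}$) and of $\bdiff=2$ (namely $\tilde\Psi_3$ in $C^{(-1,1)}$); since a rank-2 tensor only has components of $\bdiff\in\{-2,0,2\}$ and boost-weight additivity forbids assembling a negative-$\bdiff$ piece, the whole analysis splits into a $\bdiff=0$ part and a $\bdiff=2$ part.

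For the $\bdiff=0$ part I would treat, in turn, the b.w.\ $(0,0)$, $(1,1)$ and $(-1,-1)$ blocks. The $(0,0)$ block is a polynomial in the b.w.\ $(0,0)$ Weyl blocks $C^{(0,0)^\pm}$ with scalar coefficients that are monomials in the constants $\Psi_i,\tilde\Psi_2$; invoking Lemma~\ref{lemma:C(0,0)^k} one sees that every contraction pattern produces a constant multiple of $g_{ab}$, which is exactly the universality requirement. For the $(1,1)$ and $(-1,-1)$ blocks I would note that any such component must carry exactly one factor of $\Psi_1$ (resp.\ $\Psi_3$) on top of b.w.\ $(0,0)$ blocks, hence it picks up a sign under the discrete interchange \eref{eq:pair_interchange2} while $T_{ab}$ itself, being a genuine tensor, is invariant; Lemma~\ref{lemma:T(1,1)} then forces these blocks to vanish. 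None of this uses the hypothesis of the Proposition.

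The $\bdiff=2$ part is where the hypothesis enters and is the main obstacle. Here the b.w.\ $(-1,1)$, $(0,2)$ and $(-2,0)$ blocks each contain exactly one factor of $\tilde\Psi_3$ (together with, respectively, b.w.\ $(0,0)$ blocks alone, one $\Psi_1$, or one $\Psi_3$). If $\tilde\Psi_3=0$ all of these vanish trivially, so assume $\tilde\Psi_3\ne 0$. The strategy is to exploit the second discrete interchange \eref{eq:pair_interchange}, under which $\tilde\Psi_3\mapsto-\tilde\Psi_3$ and $\Psi_1\leftrightarrow\Psi_3$, $\Psi_0\leftrightarrow\Psi_4$. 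The coefficient monomials in the b.w.\ $(0,0)$ Weyl blocks are invariant under this interchange \emph{except} for the combination $\Psi_1^2\Psi_4\leftrightarrow\Psi_0\Psi_3^2$; imposing $\Psi_1^2\Psi_4=\Psi_0\Psi_3^2$ restores full invariance of all the $(0,0)$-block coefficients. One then argues that the $(0,0)$ Weyl blocks are invariant under both boosts and under \eref{eq:pair_interchange}, so a group acting transitively on the tangent space fixes them, forcing the scalar invariants $T_{11}=T_{ab}n^an^b$ and $T_{22}=T_{ab}m^am^b$ to agree; combined with the sign flip of $\tilde\Psi_3$ this yields $T_{11}\leftrightarrow -T_{22}$, and Lemma~\ref{lemma:T(0,2)} then kills the $(0,2)$ and $(-2,0)$ blocks, while Lemma~\ref{lemma:T(-1,1)} disposes of the $(-1,1)$ block. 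Assembling the three cases, the only surviving piece of $T_{ab}$ is a constant multiple of $g_{ab}$, which is the definition of $0$-universality, and the Proposition follows.

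The delicate points I expect to need care with are: verifying that the list of $\bdiff=2$ coefficient monomials really is exhausted by the three families above (this rests on the $-1$-dual-balanced property and on counting how many $\Psi_1,\Psi_3,\tilde\Psi_3$ factors b.w.\ additivity permits), and making the transitivity argument for the equality $T_{11}=T_{22}$ rigorous — one must be clear that it is the b.w.\ $(0,0)$ \emph{blocks of the Weyl tensor} (not the scalar coefficients, which carry the $\Psi_i$'s) that are group-invariant, so that only the coefficient monomials need to be checked by hand, and that the remaining checks reduce cleanly to the appendix lemmas without hidden sign errors.
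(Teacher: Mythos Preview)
Your proposal is correct and follows essentially the same approach as the paper: the same split into $\bdiff=0$ and $\bdiff=2$ parts via the $-1$-dual-balanced property, the same use of Lemma~\ref{lemma:C(0,0)^k} and Lemma~\ref{lemma:T(1,1)} for the $\bdiff=0$ blocks, and the same appeal to Lemmas~\ref{lemma:T(-1,1)} and~\ref{lemma:T(0,2)} together with the transitivity argument for the $\bdiff=2$ blocks under the hypothesis $\Psi_1^2\Psi_4=\Psi_0\Psi_3^2$ (with the trivial $\tilde\Psi_3=0$ branch handled first). One minor wording issue: the b.w.\ $(0,0)$ part of $T_{ab}$ is not literally ``a polynomial in $C^{(0,0)^\pm}$'' --- its free indices may sit on any $C^{(b,b)}$ --- but since you invoke Lemma~\ref{lemma:C(0,0)^k} (which is stated for general $C^{(b,b)}$) the argument is unaffected.
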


\subsection{Universal Walker metrics}

Let us start this section by introducing two definitions that simplify our notation.
\begin{definition}
  The symbol $\xi$ represents any $\bdiff = 1$ spin coefficient $\tau$, $\tilde\alpha$,
  $\alpha'$, $\rho'$, $\varepsilon'$, $\tilde\varepsilon'$, $\sigma$ and $\kappa'$.
\end{definition}
\begin{definition}
  If we enclose any number of directional derivatives in braces, such an object
  represents any permutation of the directional derivatives inside the braces.
\end{definition}
For instance, $\{\calD \calD'\}$ corresponds to any of $\D\D'$, $\D'\D$, $\D\delta$,
$\delta\D$, $\T\D'$, $\D'\T$, $\T\delta$ and $\delta\T$.

Now, we are able to list efficiently all components of $\nabla^{(k)}C$ contributing to rank-2
tensors. No $\bdiff < 0$ components arise from the Weyl tensor and its derivatives
since it is at least $-1$-dual-balanced, therefore, we only focus on the relevant
components of $0 \le \bdiff \le 2$. Employing Lemmas \ref{lemma:DD'mu} and
\ref{lemma:DD'xi} from \ref{appendix:DC}, the potential contributions from derivatives of the self-dual
part of the Weyl tensor $\nabla^{(n)} C^+$ are:
\begin{itemize}
  \item $\bdiff = 0$: $\Psi_i (\calD\xi)^k$ with $n = 2k$ formed by $\Psi_i$ and
    any number of $\bdiff = 0$ terms $\{\calD^l \calD'^{l-1}\} \xi$.
  \item $\bdiff = 1$: $\Psi_i (\calD\xi)^k \xi$ with $n = 2k + 1$ given by $\Psi_i$,
    $\bdiff = 1$ term $\{\calD^l \calD'^l\}\xi$ and any number of $\bdiff = 0$ terms
    $\{\calD^m \calD'^{m-1}\} \xi$.
  \item $\bdiff = 2$: These components may be constructed in two ways. Either from $\Psi_i$,
    two $\bdiff = 1$ terms $\{\calD^l \calD'^l\}\xi$ and any number of $\bdiff = 0$ terms
    $\{\calD^m \calD'^{m-1}\} \xi$ leading to $\Psi_i (\calD\xi)^{k-2} \xi^2$ with $n = 2k$.
    They also may be made of $\Psi_i$, one of the $\bdiff = 2$ terms $\{\calD^l \calD'^{l-1}\}\tilde\sigma'$
    or $\{\calD^l \calD'^{l-1}\}\tilde\kappa'$ and any number of $\bdiff = 0$ terms $\{\calD^m \calD'^{m-1}\} \xi$
    yielding $\Psi_i \tilde\Psi_3 (\calD\xi)^k$ with $n = 2k$.
\end{itemize}

The relevant components of derivatives of the anti-self-dual part of the Weyl tensor
$\nabla^{(2n)}C_{abcd}^{(0,0)^-}$ and $\nabla^{(2n)}C_{abcd}^{(-1,1)}$
contributing to rank-2 tensors are only of $\bdiff = 2$ and due to Lemma \ref{lemma:DC}
from \ref{appendix:DC}
take the forms $\tilde\Psi_2 \tilde\Psi_3 \sum(\calD\xi)^{n-1}$ and $\tilde\Psi_3 \sum(\calD\xi)^n$,
respectively.

Therefore, symmetric rank-2 tensors constructed from the metric, the Weyl tensor
and its derivatives of any order consist of the following components:
\begin{itemize}
  \item $\bdiff = 0$:
    In addition to the cases with no derivatives of the Weyl tensor already mentioned
    in section \ref{section:0universalWalker}, components coming from derivatives
    of the self-dual part of the Weyl tensor appear. These terms take the form
    $\Psi_i (\calD\xi)^k$ and due to
    \eref{eq:Walker:EinsteinASD:derspincoeff:first}--\eref{eq:Walker:EinsteinASD:derspincoeff:last}
    they yield effectively the same contributions as the cases of b.w.\ (0,0),
    (1,1) and ($-1$,$-1$) in section \ref{section:0universalWalker}.
    Therefore, the $\bdiff = 0$ part is proportional to the metric.
  \item $\bdiff = 2$:
    The components of $\bdiff = 2$ can be constructed from any number of $\bdiff = 0$
    terms and either two $\bdiff = 1$ terms as
    \begin{equation}
      \Psi_i \Psi_j (\calD\xi)^k \xi^2,
      \label{eq:rank2:bdiff2}
    \end{equation}
    or one $\bdiff = 2$ term as
    \begin{equation}
      \Psi_i (\calD\xi)^m \xi^2, \quad \Psi_i \tilde\Psi_3 (\calD\xi)^m
      \label{eq:rank2:bdiff2a}
    \end{equation}
    and
    \begin{equation}
      \tilde\Psi_3 (\calD\xi)^m, \quad \tilde\Psi_2 \tilde\Psi_3 (\calD\xi)^m,
      \label{eq:rank2:bdiff2b}
    \end{equation}
    where the $\bdiff = 2$ terms in \eref{eq:rank2:bdiff2a} and \eref{eq:rank2:bdiff2b}
    come from derivatives of the self-dual part and the anti-self-dual part,
    respectively. The metric is universal if all these $\bdiff = 2$ components
    of rank-2 tensors vanish. However, there are many non-vanishing terms arising
    from \eref{eq:rank2:bdiff2} and \eref{eq:rank2:bdiff2a} for which Lemmas
    \ref{lemma:T(1,1)}--\ref{lemma:T(0,2)} from \ref{appendix:tranfs} are not applicable.
    For example, the b.w.\ ($-1$,1) term $\Psi_0 \Psi_4 \varepsilon' \tau$ transforms
    under \eref{eq:pair_interchange} to $\Psi_0 \Psi_4 \alpha' \rho'$, etc.
    For this reason, let us assume $\Psi_i = 0$. Then the only potentially
    contributing $\bdiff = 2$ components \eref{eq:rank2:bdiff2b} are of b.w.\ $(-1,1)$, 
    take the form $P(\lambda) \tilde\Psi_3$ and vanish due to Lemma \ref{lemma:T(-1,1)}.
\end{itemize}

Finally, we can conclude that, for Einstein Walker metrics with a vanishing self-dual part of the Weyl tensor,
all symmetric rank-2 tensors constructed from the metric, the Weyl tensor and its derivatives
of any order are proportional to the metric and hence:

\begin{proposition}
  Einstein Walker metrics with a vanishing self-dual part of the Weyl tensor are universal.
  \label{prop:universal}
\end{proposition}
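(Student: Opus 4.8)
The plan is to combine the structural results already assembled in the excerpt with the explicit integrated form of the metric functions for the Einstein CSI Walker metrics. First I would record that, under the standing assumptions (Einstein, constant self-dual part) together with the extra hypothesis $\Psi_0=\Psi_1=\Psi_2=\Psi_3=\Psi_4=0$, the metric functions $A$, $B$, $C$ reduce to polynomials in $v,V$ of degree at most one, with coefficients arbitrary functions of $u,U$. Since $\tilde\Psi_2=R/12=\lambda/3$ is constant by the Einstein condition, the only remaining nonconstant curvature scalars are $\tilde\Psi_3$ and $\tilde\Psi_4$, i.e. all Weyl content lives in the anti-self-dual part $C^- = C^{(-2,2)}+C^{(-1,1)}+C^{(0,0)^-}$. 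I would then re-invoke the dual-balance bookkeeping from section \ref{section:EinsteinCSI}: $C^{(-2,2)}$ is $2$-dual-balanced, so neither it nor any of its covariant derivatives ever produce $\bdiff\le 2$ components; $C^{(-1,1)}$ is $1$-dual-balanced and $C^{(0,0)^-}$ has $\nabla C^{(0,0)^-}$ that is $1$-dual-balanced, so from $\nabla^{(k)}C^-$ with $k\ge 1$ one only ever gets $\bdiff>1$ components, hence at most linear contributions, while $C^{(0,0)^-}$ itself contributes only $\bdiff=0$ and $\bdiff=2$ blocks.

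Next I would run the $\bdiff$-classification of candidate rank-2 tensors exactly as in the ``Universal Walker metrics'' subsection, but now with $\Psi_i=0$. The $\bdiff<0$ sector is empty because $C$ is $-1$-dual-balanced (no negative $\bdiff$ survives differentiation, by Lemma \ref{lemma:covd}). The $\bdiff=0$ sector: with $\Psi_i=0$ the only $\bdiff=0$ Weyl scalar is $\tilde\Psi_2$, which is constant; a rank-2 tensor built from $g_{ab}$ and any number of $C^{(0,0)^-}$-blocks is, by Lemma \ref{lemma:C(0,0)^k}, proportional to the metric with a constant coefficient. The $\bdiff=2$ sector: after setting $\Psi_i=0$, all the problematic quadratic-in-$\xi$ and self-dual-derivative terms \eref{eq:rank2:bdiff2} and \eref{eq:rank2:bdiff2a} disappear identically, and the only surviving $\bdiff=2$ contributions are those from \eref{eq:rank2:bdiff2b}, namely $\tilde\Psi_3(\calD\xi)^m$ and $\tilde\Psi_2\tilde\Psi_3(\calD\xi)^m$ coming from $\nabla^{(2n)}C^{(-1,1)}$ and $\nabla^{(2n)}C^{(0,0)^-}$. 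These are all of boost weight $(-1,1)$, hence of the schematic form $P(\lambda)\,\tilde\Psi_3$ with $P$ a polynomial in the constant $\lambda$; by Lemma \ref{lemma:T(-1,1)} the $(-1,1)$ part of a symmetric rank-2 tensor vanishes. Finally the $\bdiff=-2$ sector is excluded as above, and the $\bdiff=0$ sector has been handled, so every symmetric rank-2 tensor built from $g$, $C$ and its covariant derivatives of arbitrary order is a constant multiple of $g_{ab}$; since universality further requires the metric to be Einstein, which is our assumption, the conclusion follows.

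The main obstacle — and the point where the argument is genuinely delicate rather than bookkeeping — is verifying that with $\Psi_i=0$ there are \emph{no} $\bdiff=2$ contributions of boost weight $(0,2)$ or $(-2,0)$ left over, only the $(-1,1)$ ones. One must check that every way of assembling a $\bdiff=2$ block out of the admissible factors (one $\bdiff=2$ term from $\nabla^{(k)}C^{(0,0)^-}$ or $\nabla^{(k)}C^{(-1,1)}$, together with any number of constant $\bdiff=0$ blocks, which are now just powers of $\tilde\Psi_2$) yields a scalar whose boost weight is forced to be $(-1,1)$: this is because $C^{(0,0)^-}$ and $C^{(-1,1)}$ carry no free ``ladder'' in $\Psi_i$ to shift the weight, and the derivative operators $\calD$, $\calD'$ that build the higher terms come in the balanced combinations $\{\calD^l\calD'^{l}\}$ and $\{\calD^l\calD'^{l-1}\}$ dictated by Lemmas \ref{lemma:DD'mu}, \ref{lemma:DD'xi} and \ref{lemma:DC}, which preserve $\bdiff$ up to the unit shifts already accounted for. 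Once that is in hand, Lemma \ref{lemma:T(-1,1)} does the rest, and no analogue of the awkward $\Psi_1^2\Psi_4=\Psi_0\Psi_3^2$ condition from the $0$-universal case is needed, precisely because the self-dual part has been switched off.
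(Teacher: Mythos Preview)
Your proposal is correct and follows essentially the same route as the paper: the $\bdiff$-classification of rank-2 contributions combined with Lemmas \ref{lemma:covd}, \ref{lemma:C(0,0)^k}, \ref{lemma:DC} and \ref{lemma:T(-1,1)}, specialized to $\Psi_i=0$ so that only the $(-1,1)$ terms $P(\lambda)\tilde\Psi_3$ survive at $\bdiff=2$, is exactly the paper's argument. One harmless inaccuracy: with $\Psi_i=0$ but $\lambda\neq 0$ the metric functions $A,B,C$ remain quadratic in $v,V$ (the $\tfrac{\lambda}{3}$ terms survive, cf.\ \eref{metricA}--\eref{metricC}), not of degree at most one --- but this observation plays no role in the proof itself.
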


Boost-weight diagram of universal Walker metrics is depicted in Figure \ref{fig:universal_Walker_metric}.
Note that the condition for the vanishing of self-dual part of the Weyl tensor is related
to our identification of the vectors $\partial_v$, $\partial_V$ generating the parallel null 2-planes
with the frame vectors $\bl$, $\btm$.
If we choose, instead, e.g.\ $\bl$ and $\bm$, the analogous statement holds for the vanishing of the anti-self-dual part.

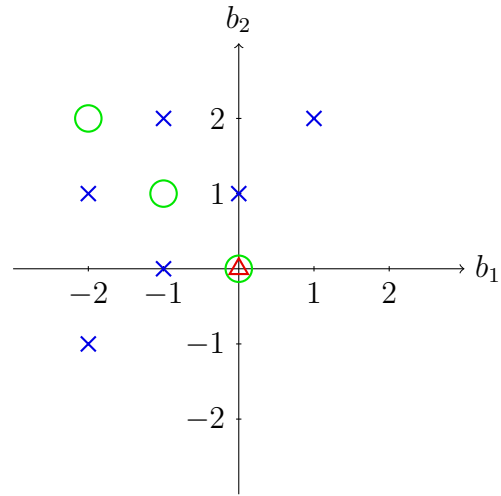
\begin{figure}
  \centering
  \caption{Boost weight diagram of Einstein Walker metrics with anti-self-dual Weyl tensor.
    The components of the Weyl tensor, Ricci tensor and spin coefficients are denoted by circles,
    triangles and crosses, respectively.}
  \begin{tikzpicture}[scale=1]
    \tikzstyle{axes}=[]
    \tikzstyle{weyl}=[only marks,mark=o,mark size=5pt,green!90!black,thick]
    \tikzstyle{ricci}=[only marks,mark=triangle,mark size=4pt,red!90!black,thick]
    \tikzstyle{spincoeffs}=[only marks,mark=x,mark size=4pt,blue!90!black,thick]
    \begin{scope}[style=axes]
      \draw[->] (-3,0) -- (3,0) node[right] {$b_1$};
      \draw[->] (0,-3) -- (0,3) node[above] {$b_2$};
      \foreach \x/\xtext in {-2, -1, 1, 2}
        \draw[xshift=\x cm] (0pt,1pt) -- (0pt,-1pt) node[below,fill=white] {$\xtext$};
      \foreach \y/\ytext in {-2, -1, 1, 2}
        \draw[yshift=\y cm] (1pt,0pt) -- (-1pt,0pt) node[left,fill=white] {$\ytext$};
    \end{scope}
    \begin{scope}[style=weyl]
      \draw plot coordinates{(0,0) (-1,1) (-2,2)};
    \end{scope}
    \begin{scope}[style=ricci]
      \draw plot coordinates{(0,0)};
    \end{scope}
    \begin{scope}[style=spincoeffs]
      \draw plot coordinates{(1,2) (0,1) (-1,2) (-1,0) (-2,1) (-2,-1)};
    \end{scope}
  \end{tikzpicture}
  \label{fig:universal_Walker_metric}
\end{figure}

Employing Proposition \ref{prop:universal}, it follows that the Walker metrics \eref{eq:Walker_metric}
are universal if the functions $A$, $B$, $C$ take the form
\begin{eqnarray}
\label{metricA}
  A = \frac{\lambda}{3} v^2 + A_{10} v + A_{01} V + A_{00}, \\
\label{metricB}
  B = \frac{\lambda}{3} V^2 + B_{10} v + B_{01} V + B_{00}, \\
\label{metricC}
  C = 2 \frac{\lambda}{3} v V + C_{10} v + C_{01} V + C_{00},
\end{eqnarray}
where $A_{10}$, $A_{01}$, $A_{00}$, $B_{10}$, $B_{01}$, $B_{00}$, $C_{10}$, $C_{01}$, and $C_{00}$
are arbitrary functions of $u$, $U$.

Although the null geodetic vectors $\bl = \partial_v$ and $\btm = \partial_V$
are non-expanding and non-twisting,
the projected trace-free symmetric parts of their covariant derivatives
\begin{equation}
  \sigma^{(\bl)}_{ab} = - \sigma \, \tilde m_a \tilde m_b, \quad
  \sigma^{(\btm)}_{ab} = - \kappa' \, \ell_a \ell_b,
\end{equation}
corresponding to the shear are, in general, non-vanishing.
Therefore, Walker metrics do not belong to the Kundt class unless $\sigma$ or $\kappa'$ vanishes,
i.e., using \eref{eq:Walker:spincoeff:first}, $B$ is independent of $v$ ($B_{10} = 0$)
or $A$ is independent of $V$ ($A_{01} = 0$).
The same observation can be also made directly by comparing the Walker metric \eref{eq:Walker_metric}
with the canonical form of the Kundt metrics \eref{eq:Kundt}.

\section{Conclusion}

We have considered four-dimensional Walker metrics of neutral signature and identified a wide subclass
of these metrics being universal, i.e., solving the vacuum field equations of any higher-order gravity theory.
Before this study, all known examples of universal metrics were locally homogeneous or Kundt.
It turns out that universal Walker metrics need not be Kundt in general, nor need they be locally homogeneous, and thus
represent a new class of universal spaces which has no analogue in the Riemannian and Lorentzian signatures. Explicitly, these universal metrics are Walker metrics \eref{eq:Walker_metric} with metric functions of the form \eref{metricA}--\eref{metricC}.

\section*{Acknowledgements} 
SH was supported through the Research Council of Norway, Toppforsk
grant no. 250367: \emph{Pseudo-Riemannian Geometry and Polynomial Curvature Invariants:
Classification, Characterisation and Applications.}
TM acknowledges support from the Albert Einstein Center for Gravitation and Astrophysics,
Czech Science Foundation GACR 14-37086G.
  
\appendix

\section{Bianchi identities}
\label{appendix:bianchi}

Contractions of the Bianchi identity $R_{ab[cd;e]} = 0$ with appropriate combinations of the frame vectors give rise to 
\begin{eqnarray}
  \fl
  \T\Psi_0 - \D\Psi_1
  - \delta\Phi_{00} + \D\Phi_{01}
  = 
  (4 \alpha + \tau') \Psi_0
  - 2 (\varepsilon - 2 \rho) \Psi_1
  - 3 \kappa \Psi_2 \nonumber \\
  - (2 \alpha' + 2 \tilde\alpha + \tilde\tau') \Phi_{00}
  + 2 (\varepsilon - \tilde\rho) \Phi_{01}
  - 2 \sigma \Phi_{10}
  + \tilde\kappa \Phi_{02}
  + 2 \kappa \Phi_{11},
  \label{eq:Bianchi:1} \\
  \fl
  \T\Psi_1 - \D\Psi_2
  - \D'\Phi_{00} + \T\Phi_{01}
  - 2 \D\Pi
  =
  \sigma' \Psi_0
  + 2 (\alpha + \tau') \Psi_1
  + 3 \rho \Psi_2
  - 2 \kappa \Psi_3 \nonumber \\
  + (2 \varepsilon' + 2 \tilde\varepsilon' + \tilde\rho') \Phi_{00}
  + 2 (\alpha - \tilde{\tau}) \Phi_{01}
  + \tilde\sigma \Phi_{02}
  - 2 \tau \Phi_{10}
  + 2 \rho \Phi_{11},
  \label{eq:Bianchi:2} \\
  \fl
  \T\Psi_2 - \D\Psi_3
  - \delta\Phi_{20} + \D\Phi_{21}
  + 2 \T\Pi
  =
  2 \sigma' \Psi_1
  + 3 \tau' \Psi_2
  + 2 (\varepsilon + \rho) \Psi_3
  - \kappa \Psi_4 \nonumber \\
  - 2 \rho' \Phi_{10}
  - 2 \tau' \Phi_{11}
  + (2\alpha' - 2 \tilde\alpha - \tilde\tau') \Phi_{20}
  - 2 (\varepsilon + \tilde\rho) \Phi_{21}
  + \tilde\kappa \Phi_{22},
  \label{eq:Bianchi:3} \\
  \fl
  \T\Psi_3 - \D\Psi_4
  - \D'\Phi_{20} + \T\Phi_{21}
  =
  3 \sigma' \Psi_2
  - 2 (\alpha - 2 \tau') \Psi_3
  + (4 \varepsilon + \rho) \Psi_4 \nonumber \\
  + 2 \kappa' \Phi_{10}
  + 2 \sigma' \Phi_{11}
  - (2 \varepsilon ' - 2 \tilde\varepsilon' - \tilde\rho') \Phi_{20}
  - 2 (\alpha + \tilde\tau) \Phi_{21}
  + \tilde\sigma \Phi_{22}.
  \label{eq:Bianchi:4}
\end{eqnarray}
The full set of equations can be obtained from \eref{eq:Bianchi:1}--\eref{eq:Bianchi:4}
using the prime operation and the application of tilde, where one must take into account
the effects of these manipulations on the frame components of the curvature tensors.
Under the prime operation, $\Psi_2$ and $\Phi_{11}$ are unchanged while
\begin{eqnarray}
  \Psi_0 \leftrightarrow \Psi_4, \quad
  \Psi_1 \leftrightarrow - \Psi_3, \\
  \Phi_{00} \leftrightarrow \Phi_{22}, \quad
  \Phi_{01} \leftrightarrow - \Phi_{21}, \quad
  \Phi_{02} \leftrightarrow \Phi_{20}, \quad
  \Phi_{10} \leftrightarrow - \Phi_{12}. \quad
\end{eqnarray}
Applying tilde, $\Phi_{00}$, $\Phi_{11}$, $\Phi_{22}$ remain unaltered and
\begin{equation}
  \Phi_{01} \leftrightarrow \Phi_{10}, \quad
  \Phi_{02} \leftrightarrow \Phi_{20}, \quad
  \Phi_{12} \leftrightarrow \Phi_{21}.
\end{equation}

\section{Walker metric in Newman--Penrose formalism}

The spin coefficients of the Walker metrics \eref{eq:Walker_metric} in the frame \eref{eq:Walker:frame}
can be obtained by expressing the directional derivatives of the frame vectors
\begin{eqnarray}
  \D \ell^a = \D n^a = \D m^a = \D \tilde{m}^a = 0, \quad
  \T \ell^a = \T n^a = \T m^a = \T \tilde{m}^a = 0, \label{eq:Walker:derframevectors:first} \\
  \D' \ell^a = - (\varepsilon' + \tilde\varepsilon') \ell^a + \tau \tilde{m}^a, \quad
  \D' n^a = (\varepsilon' + \tilde\varepsilon') n^a - \kappa' m^a - \tilde\kappa' \tilde{m}^a, \\
  \D' \tilde{m}^a = (\varepsilon' - \tilde\varepsilon') \tilde{m}^a - \kappa' \ell^a, \quad
  \D' m^a = - (\varepsilon' - \tilde\varepsilon') m^a + \tau n^a - \tilde\kappa' \ell^a, \\
  \delta \ell^a = (\alpha' + \tilde\alpha) \ell^a + \sigma \tilde{m}^a, \quad
  \delta n^a = - (\alpha' + \tilde\alpha) n^a + \rho' m^a + \tilde\sigma' \tilde{m}^a, \\
  \delta \tilde{m}^a = - (\alpha' - \tilde\alpha) \tilde{m}^a + \rho' \ell^a, \quad
  \delta m^a = (\alpha' - \tilde\alpha) m^a + \sigma n^a + \tilde\sigma' \ell^a. \label{eq:Walker:derframevectors:last}
\end{eqnarray}
Therefore, the commutators of the directional derivatives for the Walker metrics reduce to
\begin{eqnarray}
  &[\D, \T] = 0, \quad
  [\D, \D'] = (\varepsilon' + \tilde\varepsilon') \D - \tau \T, \quad
  [\T, \D'] = \kappa' \D - (\varepsilon' - \tilde\varepsilon') \T, \label{eq:Walker:commutators:1} \\
  &[\D, \delta] = - (\alpha' + \tilde\alpha) \D - \sigma \T, \quad
  [\T, \delta] = - \rho' \D + (\alpha' + \tilde\alpha) \T, \label{eq:Walker:commutators:2} \\
  &[\delta, \D'] = \tilde\kappa' \D - (\tau + \tilde\alpha + \alpha') \D' + \tilde\sigma' \T + (\rho' - \tilde\varepsilon' + \varepsilon') \delta,
\end{eqnarray}

The components of the trace-free Ricci tensor \eref{eq:Walker:Ricci:first}--\eref{eq:Walker:Ricci:last}
and the Weyl tensor \eref{eq:Walker:Weyl:first}--\eref{eq:Walker:Weyl:last}
can be rewritten using \eref{eq:Walker:spincoeff:first}--\eref{eq:Walker:spincoeff:last}
in terms of the metric functions $A$, $B$, $C$, as
\begin{eqnarray}
  \Phi_{00} = \Phi_{10} = \Phi_{20} = 0, \quad
  \Phi_{11} = \frac{1}{4} A_{,vv} - \frac{1}{4} B_{,VV}, \label{eq:Walker:Phi00-11} \\
  \Phi_{01} = - \frac{1}{2} B_{,vV} - \frac{1}{4} C_{,vv}, \quad
  \Phi_{21} = \frac{1}{2} A_{,vV} + \frac{1}{4} C_{,VV}, \label{eq:Walker:Phi01,21} \\
  \Phi_{22} = \frac{1}{2} (2B A_{,V} + C A_{,v} - A C_{,v})_{,V} - A_{,UV} + \frac{1}{2} C_{,uV} - \frac{1}{8} (C^2)_{,VV}, \\
  \Phi_{02} = \frac{1}{2} (2A B_{,v} + C B_{,V} - B C_{,V})_{,v} - B_{,uv} + \frac{1}{2} C_{,vU} - \frac{1}{8} (C^2)_{,vv}, \\
  \Phi_{12} = \frac{1}{4} A (2 B_{,vV} - C_{,vv}) + \frac{1}{4} B (2 A_{,vV} - C_{,VV}) + \frac{1}{4} C (A_{,vv} + B_{,VV}) \\
    \qquad + A_{,V} B_{,v} - \frac{1}{2} A_{,vU} - \frac{1}{2} B_{,uV} + \frac{1}{4} C_{,uv} + \frac{1}{4} C_{,UV} - \frac{1}{8} (C^2)_{,vV} \label{eq:Walker:Phi12}
\end{eqnarray}
and
\begin{eqnarray}
  \Psi_0 = B_{,vv}, \quad
  \Psi_1 = \frac{1}{2} B_{,vV} - \frac{1}{4} C_{,vv}, \quad
  \Psi_2 = \frac{A_{,vv} + B_{,VV} - 2 C_{,vV}}{6}, \label{eq:Walker:Psi0-2} \\
  \Psi_3 = \frac{1}{2} A_{,vV} - \frac{1}{4} C_{,VV}, \quad
  \Psi_4 = A_{,VV}, \label{eq:Walker:Psi3,4} \\
  \tilde\Psi_0 = \tilde\Psi_1 = 0, \quad
  \tilde\Psi_2 = \frac{A_{,vv} + B_{,VV} + C_{,vV}}{6} = \frac{R}{12}, \label{eq:Walker:tildePsi0-2} \\
  \tilde\Psi_3 = -\frac{1}{4} \Big( A (C_{,vv} + 2 B_{,vV}) - B (C_{,VV} + 2 A_{,vV}) - C (A_{,vv} - B_{VV}) \nonumber \\
  \qquad + 2 A_{,vU} - 2 B_{,uV} - C_{uv} + C_{,UV} \Big), \\
  \tilde\Psi_4 = - A^2 B_{,vv} - B^2 A_{,VV} - \frac{1}{4} C^2 (A_{,vv} + B_{,VV} - C_{,vV}) + AB C_{,vV} \nonumber \\
    \qquad + A (B_{,v} C_{,V} - B_{,V} C_{,v} + 2 B_{,uv} - C_{,vU}) - \frac{1}{2} AC (2B_{,vV} - C_{,vv}) \nonumber \\
    \qquad + B (A_{,V} C_{,v} - A_{,v} C_{,V} + 2 A_{,UV} - C_{,uV}) - \frac{1}{2} BC (2A_{,vV} - C_{,VV}) \nonumber \\
    \qquad - \frac{1}{2} C (2 A_{,V} B_{,v} - 2 A_{,v} B_{,V} - 2 A_{,vU} - 2 B_{,uV} + C_{,uv} + C_{,UV}) \nonumber \\
    \qquad + A_{,u} B_{,v} - A_{,v} B_{,u} - A_{,U} B_{,V} + A_{,V} B_{,U} - A_{,UU} - B_{,uu} \nonumber \\
    \qquad + A_{,v} C_{,U} - A_{,U} C_{,v} - B_{,u} C_{,V} + B_{,V} C_{,u} + C_{,uU}, \label{eq:Walker:tildePsi4}
\end{eqnarray}
respectively.

In the dual-balance scalar approach employed in \Sref{section:EinsteinCSI} for determining
the possible contributions of the Weyl tensor and its covariant derivatives to rank-2 tensors,
it is crucial to express $\D$ and $\T$ derivatives of the spin coefficients and the Weyl tensor frame components.
Straightforwardly from the Ricci identities \cite{Law2008}
for Einstein Walker metrics, we get
\begin{eqnarray}
  \D\sigma = - \Psi_0, \quad
  \D\alpha' = - \D\tau = \Psi_1, \quad
  \D\kappa' = - \Psi_3, \quad
  \D\tilde\sigma' = \D\tilde\alpha = 0, \quad
  \label{eq:Walker:EinsteinASD:derspincoeff:first} \\
  \D\rho' = \Psi_2 - \frac{\lambda}{3}, \quad
  \D\varepsilon' = - \Psi_2 - \frac{\lambda}{6}, \quad
  \D\tilde\varepsilon' = - \frac{\lambda}{2}, \quad
  \D\tilde\kappa' =  - \tilde\Psi_3, \\
  \T\sigma = - \Psi_1, \quad
  \T\rho' = - \T\varepsilon' = \Psi_3, \quad
  \T\kappa' = - \Psi_4, \quad
  \T\tilde\kappa' = \T\tilde\varepsilon' = 0, \\
  \T\tau = - \Psi_2 + \frac{\lambda}{3}, \quad
  \T\alpha' = \Psi_2 + \frac{\lambda}{6}, \quad
  \T\tilde\alpha = - \frac{\lambda}{2}, \quad
  \T\tilde\sigma' = - \tilde\Psi_3.
  \label{eq:Walker:EinsteinASD:derspincoeff:last}
\end{eqnarray}
Applying tilde and prime on the Bianchi identities \eref{eq:Bianchi:1}, \eref{eq:Bianchi:2}
and tilde on the Bianchi identities \eref{eq:Bianchi:3}, \eref{eq:Bianchi:4}, respectively,
it immediately implies
\begin{eqnarray}
  \D\tilde\Psi_3 = \T\tilde\Psi_3 = 0.
  \label{eq:Walker:EinsteinASD:DtildePsi3} \\
  \D\tilde\Psi_4 - \delta\tilde\Psi_3 = - 3 \tilde\sigma' \tilde\Psi_2 + 2 \tilde\alpha \tilde\Psi_3, \\
  \T\tilde\Psi_4 - \D'\tilde\Psi_3 = 3 \tilde\kappa' \tilde\Psi_2 - 2 \tilde\varepsilon' \tilde\Psi_3.
  \label{eq:Walker:EinsteinASD:TtildePsi4}
\end{eqnarray}
Lastly, using \eref{eq:Walker:EinsteinASD:derspincoeff:first}--\eref{eq:Walker:EinsteinASD:TtildePsi4},
$\tilde\Psi_2 = \textnormal{const}$,
and the commutators \eref{eq:Walker:commutators:1}, \eref{eq:Walker:commutators:2}, it follows that
\begin{equation}
  \D^2\tilde\Psi_4 = \D\T\tilde\Psi_4 = \T^2\tilde\Psi_4 = 0.
  \label{eq:Walker:EinsteinASD:DtildePsi4}
\end{equation}

\section{Properties of $k$-dual-balanced scalars and tensors}
\label{appendix:balanced}

\begin{lemma}
  For a $k$-dual-balanced scalar $\eta$, a scalar $\mu\eta$ is $k$-dual-balanced if
  $b_\mu \equiv \bdiff(\mu) \ge 0$ and $\calD^p \mu = 0$ for $p \le b_\mu + 1$.
  \label{lemma:product}
\end{lemma}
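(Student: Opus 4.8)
The plan is to verify the two defining conditions of a $k$-dual-balanced scalar directly for $\mu\eta$, using the corresponding conditions for $\eta$ together with the hypotheses on $\mu$. Since boost weights are additive, $\bdiff(\mu\eta) = \bdiff(\mu) + \bdiff(\eta) = b_\mu + \bdiff(\eta)$, and because $b_\mu \ge 0$ this quantity is at least $\bdiff(\eta)$. First I would check condition \eref{def:balanced:cond1}: suppose $\bdiff(\mu\eta) \le k$. Then $\bdiff(\eta) = \bdiff(\mu\eta) - b_\mu \le k$, so $\eta = 0$ by the corresponding property of $\eta$, hence $\mu\eta = 0$. This disposes of the first requirement regardless of the vanishing assumptions on $\mu$.

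The substantive part is condition \eref{def:balanced:cond2}: assuming $\bdiff(\mu\eta) > k$, I must show $\calD^{\,\bdiff(\mu\eta)-k}(\mu\eta) = 0$. I would expand this using the Leibniz rule for $\calD$ (noting, as emphasized in the text, that $\D$ and $\T$ commute, so $\calD^m$ acting on a product is governed by the ordinary multinomial expansion over the pairs $\D^p\T^q$). Writing $m = \bdiff(\mu\eta) - k = b_\mu + (\bdiff(\eta) - k)$, every term in the expansion of $\calD^m(\mu\eta)$ is, up to combinatorial factors, of the form $(\calD^{j}\mu)(\calD^{m-j}\eta)$ for $0 \le j \le m$. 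I would split the sum at $j = b_\mu + 1$: for $j \le b_\mu + 1$ the factor $\calD^{j}\mu$ vanishes by hypothesis; for $j \ge b_\mu + 1$, i.e. $j > b_\mu$, we have $m - j < m - b_\mu = \bdiff(\eta) - k$, so $m - j \le \bdiff(\eta) - k - 1$... wait, I need the factor $\calD^{m-j}\eta$ to vanish, which requires $m-j \ge \bdiff(\eta) - k$. Let me instead split at $j = b_\mu$: for $j \le b_\mu$ we have $m - j \ge m - b_\mu = \bdiff(\eta) - k$, so $\calD^{m-j}\eta = 0$ whenever $\bdiff(\eta) > k$ (and when $\bdiff(\eta) \le k$ condition one already forces $\eta = 0$, hence this factor is zero too); for $j \ge b_\mu + 1$, the hypothesis $\calD^p\mu = 0$ for $p \le b_\mu + 1$ gives $\calD^{j}\mu = 0$ directly when $j = b_\mu+1$, and for $j > b_\mu + 1$ it follows since $\calD^{j}\mu = \calD^{j - (b_\mu+1)}(\calD^{b_\mu+1}\mu) = 0$. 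Hence every term in the multinomial expansion vanishes, giving $\calD^m(\mu\eta) = 0$.

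The one subtlety I would be careful with is the edge case $\bdiff(\eta) \le k$ combined with $\bdiff(\mu\eta) > k$: here $\eta = 0$ by condition \eref{def:balanced:cond1} for $\eta$, so $\mu\eta = 0$ identically and there is nothing to prove; I would mention this explicitly so the index bookkeeping in the main case may assume $\bdiff(\eta) > k$. The main obstacle, such as it is, is purely combinatorial: making sure the split of the multinomial sum is at the right index so that each summand has either too many derivatives landing on $\mu$ (killed by $\calD^p\mu = 0$ for $p \le b_\mu+1$) or enough derivatives landing on $\eta$ (killed by $\calD^{\bdiff(\eta)-k}\eta = 0$). I expect no genuine analytic difficulty; the proof is a short application of the Leibniz rule together with the definitions.
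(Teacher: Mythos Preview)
Your proposal is correct and follows essentially the same route as the paper: additivity of $\bdiff$ gives $\bdiff(\mu\eta)>k$, and a Leibniz expansion of $\calD^{\,b_\mu+b_\eta-k}(\mu\eta)$ kills every term because either enough derivatives hit $\eta$ (at least $b_\eta-k$) or enough hit $\mu$ (at least $b_\mu+1$). The paper compresses your sum-splitting into the single inequality $b_\mu+b_\eta-k\ge (p-1)+(b_\eta-k-1)+1$, which is exactly the same count; your treatment of the edge case $b_\eta\le k$ is a harmless extra that the paper silently absorbs by taking $\eta\neq 0$.
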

\begin{proof}
  Since $\eta$ is $k$-dual-balanced, $b_\eta \equiv \bdiff(\eta) > k$.
  For $\mu\eta$ to be $k$-dual-balanced, necessarily $b_{\mu\eta} \equiv \bdiff(\mu\eta) = b_\mu + b_\eta > k$,
  which holds for $b_\mu$ non-negative.

  From the definition, $\calD^q \eta = 0$ for $q \ge b_\eta - k$
  and assume that there exists $p$ such that $\calD^p \mu = 0$.
  Now, $\calD^{b_{\mu\eta} - k}(\mu\eta) = \calD^{b_\mu + b_\eta - k}(\mu\eta) = 0$
  if $b_\mu + b_\eta - k \ge (p - 1) + (b_\eta - k - 1) + 1 = p + b_\eta - k - 1$
  and therefore $p \le b_\mu + 1$. 
\end{proof}

\begin{lemma}
  For a $k$-dual-balanced scalar $\eta$, scalars $\D\eta$ and $\T\eta$ are $k$-dual-balanced.
  \label{lemma:D&T}
\end{lemma}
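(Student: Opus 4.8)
The plan is to verify directly that $\D\eta$ (and symmetrically $\T\eta$) satisfies both defining conditions \eref{def:balanced:cond1} and \eref{def:balanced:cond2} of $k$-dual-balancedness. The two ingredients I would rely on are both already in hand: the boost-weight shift $\bdiff(\D\eta)=\bdiff(\T\eta)=\bdiff(\eta)-1$, and the commutativity of $\D$ and $\T$, so that $\calD^m$ is well-defined and $\calD^m\eta=0$ means $\D^p\T^q\eta=0$ for every $p+q=m$. Writing $b=\bdiff(\eta)$, so that $\bdiff(\D\eta)=b-1$, I would split the argument according to where $b-1$ falls relative to the threshold $k$.

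First, for condition \eref{def:balanced:cond1} applied to $\D\eta$, I must show $\D\eta=0$ whenever $b-1\le k$, that is $b\le k+1$. When $b\le k$ the hypothesis already forces $\eta=0$, hence $\D\eta=0$ trivially. The only substantive case is the boundary $b=k+1$: here condition \eref{def:balanced:cond2} for $\eta$ reads $\calD^{b-k}\eta=\calD^1\eta=0$, which by the convention for $\calD$ means $\D\eta=0$ and $\T\eta=0$ simultaneously, giving $\D\eta=0$ as needed. This boundary, where the two defining conditions abut, is the single place the argument is not purely formal, and it is where I expect any subtlety to sit.

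Next, for condition \eref{def:balanced:cond2} applied to $\D\eta$ in the complementary range $b-1>k$ (equivalently $b\ge k+2$), I would show $\calD^{(b-1)-k}(\D\eta)=\calD^{b-k-1}(\D\eta)=0$. Each component of this object is of the form $\D^p\T^q(\D\eta)=\D^{p+1}\T^q\eta$ with $p+q=b-k-1$, where commutativity is used to pull the extra $\D$ across. Since $(p+1)+q=b-k$, each such term is one of the vanishing terms in $\calD^{b-k}\eta$, and this latter vanishes because $b>k$ makes condition \eref{def:balanced:cond2} for $\eta$ active. Hence $\calD^{b-k-1}(\D\eta)=0$.

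Finally, the identical bookkeeping with the derivative swapped, $\D^p\T^q(\T\eta)=\D^p\T^{q+1}\eta$ and $\calD^1\eta=0\Rightarrow\T\eta=0$ at the boundary, shows $\T\eta$ meets both conditions as well. No new identities enter beyond the shift rule and commutativity; the entire content is tracking the exponent of $\calD$ against the threshold $k$.
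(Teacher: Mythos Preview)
Your proof is correct and follows essentially the same approach as the paper's: both verify conditions \eref{def:balanced:cond1} and \eref{def:balanced:cond2} directly from the boost-weight shift $\bdiff(\calD\eta)=\bdiff(\eta)-1$, with the boundary case $b_\eta=k+1$ handled via $\calD^{1}\eta=0$ and the case $b_\eta>k+1$ via $\calD^{(b_\eta-1)-k}(\calD\eta)=\calD^{b_\eta-k}\eta=0$. Your treatment is slightly more explicit in separating off the trivial case $b_\eta\le k$ and in unpacking the commutativity of $\D$ and $\T$, but the argument is the same.
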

\begin{proof}
  As follows from the definitions of the directional derivatives $\D$ and $\T$ \eref{eq:frame:directderiv},
  the application of $\calD$ on $\eta$ decreases the boost weight difference,
  i.e.\ $b_{\calD\eta} \equiv \bdiff(\calD\eta) = b_\eta - 1$, where $b_\eta \equiv \bdiff(\eta) > k$
  since $\eta$ is $k$-dual-balanced.
  If $b_{\calD\eta} \le k$ implying $b_\eta = k + 1$, then according to the first condition
  \eref{def:balanced:cond1} $\calD\eta$ has to vanish, which holds due to $\calD^{b_\eta - k}\eta = 0$.

  Otherwise $b_{\calD\eta} > k$ and the second condition \eref{def:balanced:cond2} for $\calD\eta$
  is met since $\calD^{(b_\eta - 1) - k}(\calD\eta) = \calD^{b_\eta - k}\eta = 0$.
\end{proof}

\begin{lemma}
  In Einstein Walker spaces with a constant self-dual part of the Weyl tensor,
  for a $k$-dual-balanced scalar $\eta$, scalars $\D'\eta$ and $\delta\eta$ are $k$-dual-balanced.
  \label{lemma:D'&delta}
\end{lemma}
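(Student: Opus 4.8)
\textbf{Proof plan for Lemma \ref{lemma:D'&delta}.}
The statement is the natural ``prime-side'' companion of Lemma \ref{lemma:D&T}, and the strategy is to reduce the action of $\D'$ and $\delta$ (collectively $\calD'$) to a combination of $\calD$-derivatives and multiplication by spin coefficients, for which Lemmas \ref{lemma:product} and \ref{lemma:D&T} already apply. The key structural fact is that, for Einstein Walker metrics with a constant self-dual part, the commutators \eref{eq:Walker:commutators:1}--\eref{eq:Walker:commutators:2} show that $\D$ and $\T$ commute with each other and that $[\calD,\calD']$ is, up to a spin-coefficient factor, again a $\calD$-derivative; more importantly, the $\D$ and $\T$ derivatives of \emph{all} the relevant spin coefficients and Weyl components are expressed purely in terms of $\Psi_i$, $\tilde\Psi_3$ and $\lambda$ by \eref{eq:Walker:EinsteinASD:derspincoeff:first}--\eref{eq:Walker:EinsteinASD:derspincoeff:last} and \eref{eq:Walker:EinsteinASD:DtildePsi3}. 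Since $\Psi_i$ are constant and $\calD\tilde\Psi_3 = 0$, every such derivative is $\calD$-constant.

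The plan is as follows. First I would record that $\calD' \eta$ has boost weight difference $\bdiff(\calD'\eta) = \bdiff(\eta)+1 =: b_\eta + 1 > k+1 > k$, so condition \eref{def:balanced:cond1} is automatic and only \eref{def:balanced:cond2}, i.e.\ $\calD^{\,b_\eta+1-k}(\calD'\eta) = 0$, needs to be checked. Second, I would push the $\calD$'s through the $\calD'$ using the commutators: each time a $\D$ or $\T$ is moved past a $\D'$ or $\delta$, it produces a commutator term which, by \eref{eq:Walker:commutators:1}--\eref{eq:Walker:commutators:2}, is a spin coefficient of $\bdiff = 1$ (namely $\varepsilon'+\tilde\varepsilon'$, $\tau$, $\kappa'$, $\alpha'+\tilde\alpha$, $\rho'$) times a $\calD$-derivative of $\eta$ (of the same or lower order). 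Thus $\calD^{\,m}\calD'\eta$ is a sum of two kinds of terms: (a) $\calD'(\calD^{m}\eta)$, and (b) $\xi \cdot \calD^{\,m'}\eta$ with $m' \le m-1$ and $\xi$ one of the listed $\bdiff=1$ spin coefficients. Taking $m = b_\eta + 1 - k$: for terms of type (a), $\calD^{m}\eta = \calD^{\,b_\eta + 1 - k}\eta = \calD(\calD^{\,b_\eta-k}\eta) = 0$ since $\eta$ is $k$-dual-balanced; for terms of type (b), one uses that $\xi$ is $\calD$-constant, i.e.\ $\calD\xi$ is a constant combination of $\Psi_i,\tilde\Psi_3,\lambda$ by the Ricci identities, so that $\calD^{\,m'}\eta$ already vanishes when $m' \ge b_\eta - k$, and when $m' < b_\eta - k$ one iterates the argument: $\xi\,\calD^{m'}\eta$ still needs $b_\eta+1-k-m' \ge 1$ more $\calD$-derivatives applied to it, which by the product rule distribute onto $\xi$ (yielding constants) and onto $\calD^{m'}\eta$, pushing the latter up to order $\ge b_\eta-k$ and hence to zero. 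Actually the cleanest packaging is to invoke Lemma \ref{lemma:product}: $\calD\xi$ constant gives $\calD^2\xi = 0$, and $\bdiff(\xi)=1\ge 0$ is not quite what Lemma \ref{lemma:product} wants (it asks $\calD^p\xi=0$ for $p\le b_\mu+1=2$, i.e.\ $\calD^2\xi=0$), so $\xi\cdot(\calD^{\,m'}\eta)$ is $k$-dual-balanced whenever $\calD^{\,m'}\eta$ is, and the latter is $k$-dual-balanced by Lemma \ref{lemma:D&T} applied $m'$ times to $\eta$.

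Assembling these pieces: $\calD^{\,b_\eta+1-k}(\calD'\eta)$ is a finite sum of terms each of which vanishes, hence $\calD'\eta$ satisfies \eref{def:balanced:cond2} and, together with the boost-weight bound, is $k$-dual-balanced. The same computation runs verbatim with $\D'$ replaced by $\delta$, so both $\D'\eta$ and $\delta\eta$ are covered, which is exactly the content of ``$\calD'\eta$ is $k$-dual-balanced''. The main obstacle I anticipate is purely bookkeeping: one must be careful that the commutator $[\delta,\D']$ in \eref{eq:Walker:commutators:2} also produces a $\D'$ and a $\delta$ term (not only $\calD$ terms), so when both a $\D'$ and a $\delta$ are present the reduction is slightly more involved; however, since we only ever apply \emph{one} $\calD'$ to $\eta$ and then a string of $\calD$'s, the troublesome $[\delta,\D']$ commutator never arises in the order in which we move derivatives, and the argument goes through cleanly. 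A secondary point to state carefully is that the spin coefficients appearing as commutator coefficients are precisely those whose $\D,\T$-derivatives are listed in \eref{eq:Walker:EinsteinASD:derspincoeff:first}--\eref{eq:Walker:EinsteinASD:derspincoeff:last}, so the ``$\calD$-constant'' claim is legitimate — this is where the hypotheses ``Einstein'' and ``constant self-dual part'' are used.
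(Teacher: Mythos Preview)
Your proposal is correct and follows essentially the same approach as the paper: both arguments commute the string of $\calD$'s past the single $\calD'$ using \eref{eq:Walker:commutators:1}--\eref{eq:Walker:commutators:2}, observe that the commutator terms are $\bdiff=1$ spin coefficients times $\calD$-derivatives of $\eta$, and kill these using $\calD^2\xi=0$ (which is exactly where the Einstein and constant-self-dual hypotheses enter via \eref{eq:Walker:EinsteinASD:derspincoeff:first}--\eref{eq:Walker:EinsteinASD:derspincoeff:last}). The only cosmetic difference is that you package the vanishing of the commutator terms via Lemma~\ref{lemma:product} and Lemma~\ref{lemma:D&T}, while the paper writes out the Leibniz expansion $\calD^p\xi\,\calD^q\eta$ directly; both amount to the same computation.
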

\begin{proof}
  The derivatives $\D'$ and $\delta$ \eref{eq:frame:directderiv} increase the boost weight difference,
  i.e.\ $\bdiff(\D'\eta) = \bdiff(\delta\eta) = b_\eta + 1 > k$. Thus it remains to show that
  $\calD^{(b_\eta + 1) - k}(\D'\eta)$ and $\calD^{(b_\eta + 1) - k}(\delta\eta)$ vanish.
  In the former case, using the commutators \eref{eq:Walker:commutators:1} we obtain
  \begin{equation}
    \calD^{b_\eta - k}\D\D'\eta = \calD^{b_\eta - k}\D'\D\eta
      + \calD^{b_\eta - k}[(\varepsilon' + \tilde\varepsilon')\D\eta]
      - \calD^{b_\eta - k}[\tau\T\eta]
    \label{eq:lemmaD'}
  \end{equation}
  or
  \begin{equation}
    \calD^{b_\eta - k}\T\D'\eta = \calD^{b_\eta - k}\D'\T\eta
      + \calD^{b_\eta - k}[\kappa'\D\eta]
      - \calD^{b_\eta - k}[(\varepsilon' - \tilde\varepsilon')\T\eta].
    \label{eq:lemmadelta}
  \end{equation}
  The application of the Leibniz rule for the last two terms in \eref{eq:lemmaD'} and \eref{eq:lemmadelta} yields terms of the form
  \begin{equation*}
    \calD^p\tau \calD^q\eta, \quad
    \calD^p\kappa' \calD^q\eta, \quad
    \calD^p\varepsilon'\calD^q\eta, \quad
    \calD^p\tilde\varepsilon' \calD^q\eta
  \end{equation*}
  with $p + q = b_\eta - k + 1$. All these terms vanish due to the assumption that $\calD^{b_\eta - k}\eta = 0$
  or since $\calD^2\kappa' = \calD^2\tau = \calD^2\varepsilon' = \calD^2\tilde\varepsilon' = 0$
  as it follows for Einstein Walker metrics with constant self-dual components of the Weyl tensor
  from \eref{eq:Walker:EinsteinASD:derspincoeff:first}--\eref{eq:Walker:EinsteinASD:derspincoeff:last}.
  Using the commutators and the same arguments repeatedly, we finally get 
  $\calD^{b_\eta - k + 1} \D'\eta = \D'\calD^{b_\eta - k + 1}\eta = 0$. 

  Analogously, one can show that $\delta\eta$ is $k$-dual-balanced using the commutators
  \eref{eq:Walker:commutators:2} and $\calD^2\sigma = \calD^2\rho' = \calD^2\alpha' = \calD^2\tilde\alpha = 0$
  which follows from
  \eref{eq:Walker:EinsteinASD:derspincoeff:first}--\eref{eq:Walker:EinsteinASD:derspincoeff:last}
  for Einstein Walker metrics with constant $\Psi_i$.
\end{proof}

\begin{lemma}
  For Einstein Walker metrics with a constant self-dual part of the Weyl tensor,
  the covariant derivative of a $k$-dual-balanced tensor is a $k$-dual-balanced tensor.
  \label{lemma:covd}
\end{lemma}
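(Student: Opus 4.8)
The plan is to reduce the claim about a general $k$-dual-balanced tensor to the scalar statements already established in Lemmas \ref{lemma:D&T} and \ref{lemma:D'&delta}. First I would recall that, by definition, a tensor $T$ is $k$-dual-balanced precisely when every frame component $T_{a_1\cdots a_r}$ (that is, every full contraction of $T$ with the frame vectors $\bl,\bn,\bm,\btm$) is a $k$-dual-balanced scalar. So it suffices to show that every frame component of $\nabla T$ is $k$-dual-balanced. The key observation is that a frame component of $\nabla T$ is, up to spin-coefficient correction terms, a directional derivative of a frame component of $T$; more precisely, if $\eta = T_{a_1\cdots a_r}e_{a_1}\cdots e_{a_r}$ is a frame component (each $e$ one of $\bl,\bn,\bm,\btm$), then contracting $\nabla_b T_{a_1\cdots a_r}$ with one of the frame vectors in the $b$ slot and with $e_{a_1},\dots,e_{a_r}$ in the remaining slots gives a directional derivative $\D\eta$, $\T\eta$, $\D'\eta$ or $\delta\eta$, minus a sum of terms of the form (spin coefficient) $\times$ (another frame component of $T$), the correction terms coming from the Leibniz rule applied to $\nabla_b e_{a_i}$ expressed in the frame via \eref{eq:Walker:derframevectors:first}--\eref{eq:Walker:derframevectors:last}.

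With this decomposition in hand the proof splits into two parts. For the directional-derivative part: $\D\eta$ and $\T\eta$ are $k$-dual-balanced by Lemma \ref{lemma:D&T}, and $\D'\eta$ and $\delta\eta$ are $k$-dual-balanced by Lemma \ref{lemma:D'&delta} (this is exactly where the Einstein and constant-self-dual hypotheses enter, already absorbed into that lemma). For the correction terms: each is a product $\mu\, \zeta$, where $\mu$ is one of the spin coefficients appearing in \eref{eq:Walker:derframevectors:first}--\eref{eq:Walker:derframevectors:last}, namely $\varepsilon'$, $\tilde\varepsilon'$, $\tau$, $\kappa'$, $\tilde\kappa'$, $\alpha'$, $\tilde\alpha$, $\sigma$, $\rho'$ or $\tilde\sigma'$, and $\zeta$ is another frame component of $T$, hence a $k$-dual-balanced scalar. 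I would then invoke Lemma \ref{lemma:product}: a spin coefficient $\mu$ attached to $e_{a_i}$ in the expansion of $\nabla_b e_{a_i}$ always has $\bdiff(\mu)\ge 0$ (one checks this against Figure \ref{fig:spincoeffs}: the relevant coefficients are those of boost weights $(0,\pm1)$, $(\mp1,\pm2)$, $(\mp1,0)$, i.e.\ $\bdiff\in\{1,3,1\}$ — all positive), and for Einstein Walker metrics with constant self-dual Weyl part one has $\calD^2\mu=0$ for each such $\mu$, which follows from \eref{eq:Walker:EinsteinASD:derspincoeff:first}--\eref{eq:Walker:EinsteinASD:derspincoeff:last}: every $\D$- or $\T$-derivative of these spin coefficients equals a constant ($\Psi_i$, $\tilde\Psi_3$, or a multiple of $\lambda$), hence has vanishing $\calD$-derivative. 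Thus $\calD^p\mu=0$ for $p\le 2\le \bdiff(\mu)+1$, and Lemma \ref{lemma:product} gives that $\mu\zeta$ is $k$-dual-balanced. Since the class of $k$-dual-balanced scalars is closed under addition (each condition \eref{def:balanced:cond1}--\eref{def:balanced:cond2} being linear), the full frame component of $\nabla T$ — a sum of a directional derivative and finitely many such correction terms — is $k$-dual-balanced.

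Finally I would note that $\D$ and $\T$ commute and that Lemma \ref{lemma:product} also handles products with the spin coefficients, so there is no difficulty iterating: applying the result to $\nabla T$, then to $\nabla^{(2)}T$, and so on shows $\nabla^{(m)}T$ is $k$-dual-balanced for every $m$, which is the consequence used in \Sref{section:EinsteinCSI}. The one point requiring genuine care — and the main potential obstacle — is the bookkeeping that turns a frame component of $\nabla T$ into ``directional derivative plus spin-coefficient-times-frame-component''; one must be sure that \emph{every} correction term that arises really does involve a spin coefficient from the list in \eref{eq:Walker:derframevectors:first}--\eref{eq:Walker:derframevectors:last} with the two needed properties ($\bdiff\ge0$ and $\calD^2=0$), and in particular that no derivative lands on a frame vector in a way that produces a $\bdiff<0$ coefficient. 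Inspecting the expansions \eref{eq:Walker:derframevectors:first}--\eref{eq:Walker:derframevectors:last} confirms this: the only spin coefficients occurring are precisely the ten nonvanishing ones of \eref{eq:Walker:spincoeff:first}--\eref{eq:Walker:spincoeff:last}, all with $\bdiff\in\{1,3\}>0$, so the argument goes through.
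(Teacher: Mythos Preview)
Your proof is correct and follows essentially the same route as the paper: decompose the frame components of $\nabla T$ into directional derivatives of frame components of $T$ (handled by Lemmas \ref{lemma:D&T} and \ref{lemma:D'&delta}) plus spin-coefficient correction terms (handled by Lemma \ref{lemma:product} after checking $\bdiff(\mu)\in\{1,3\}$ and $\calD^2\mu=0$). One small wording slip: $\tilde\Psi_3$ is not constant, but it does satisfy $\calD\tilde\Psi_3=0$ by \eref{eq:Walker:EinsteinASD:DtildePsi3}, so your conclusion $\calD^2\mu=0$ for $\mu\in\{\tilde\kappa',\tilde\sigma'\}$ still holds.
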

\begin{proof}
  Taking the covariant derivative $\nabla_a = n_a \D + \ell_a \D' - m_a \T - \tilde m_a \delta$
  of a tensor, there appear terms involving directional derivatives of the tensor frame components
  \begin{equation}
    \D\eta, \D'\eta, \T\eta, \delta\eta
    \label{eq:lemmacovd:dereta}
  \end{equation}
  and directional derivatives of the frame vectors
  \eref{eq:Walker:derframevectors:first}--\eref{eq:Walker:derframevectors:last}
  leading to
  \begin{equation}
    \sigma \eta, \tau \eta, \kappa' \eta, \tilde\kappa' \eta, \rho' \eta, \tilde\sigma' \eta, \varepsilon' \eta,
    \tilde\varepsilon' \eta, \tilde\alpha \eta, \alpha' \eta.
    \label{eq:lemmacovd:mueta}
  \end{equation}
  For a $k$-dual-balanced component $\eta$, the terms \eref{eq:lemmacovd:dereta} are $k$-dual-balanced
  as follows from Lemmas \ref{lemma:D&T} and \ref{lemma:D'&delta}.
  For Einstein Walker metrics with constant self-dual components of the Weyl tensor,
  the ten non-trivial spin coefficients split to two groups
  according to their boost weight differences
  \begin{eqnarray}
    &\bdiff(\mu) = 1: \quad \mu \in \{\sigma, \tau, \kappa', \rho', \varepsilon', \tilde\varepsilon', \tilde\alpha, \alpha'\} \label{eq:Walker:EinsteinASD:spincoeff:-1db} \\
    &\bdiff(\mu) = 3: \quad \mu \in \{\tilde\kappa', \tilde\sigma'\} \label{eq:Walker:EinsteinASD:spincoeff:1db}
  \end{eqnarray}
  and $\calD^2 \mu = 0$ in both cases as can be seen directly from 
  \eref{eq:Walker:EinsteinASD:derspincoeff:first}--\eref{eq:Walker:EinsteinASD:DtildePsi3}.
  Then Lemma \ref{lemma:product} implies that the terms \eref{eq:lemmacovd:mueta} are $k$-dual-balanced,
  which finishes the proof.
\end{proof}

\section{Discrete frame transformations}
\label{appendix:tranfs}

The frame transformation \eref{eq:pair_interchange2} interchanging the pairs of the
frame vectors provides several implications for rank-2 tensors.

\begin{lemma}
  The b.w.\ (0,0) part of any symmetric rank-2 tensor polynomial in $C^{(b,b)}$
  is proportional to the metric.
  \label{lemma:C(0,0)^k}
\end{lemma}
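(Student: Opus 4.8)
The plan is to deduce the statement from a single discrete symmetry. Realise the pair interchange \eref{eq:pair_interchange2} as the linear involution $\sigma$ of the tangent space determined on the frame by $\sigma\bl=\bm$, $\sigma\bm=\bl$, $\sigma\bn=-\btm$, $\sigma\btm=-\bn$. Since $\sigma$ carries the null frame to another null frame with the same normalisation $\ell_an^a=1$, $m_a\tilde m^a=-1$, it is an isometry, so $\sigma^*g=g$; and clearly $\sigma^2=\mathrm{id}$.

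The main step is to check, directly from \eref{eq:C(2,2)}--\eref{eq:C(0,0)-}, that every diagonal Weyl block is $\sigma$-invariant, i.e.\ $\sigma^*C^{(b,b)}=C^{(b,b)}$ for $C^{(2,2)}$, $C^{(1,1)}$, $C^{(0,0)^+}$, $C^{(0,0)^-}$, $C^{(-1,-1)}$, $C^{(-2,-2)}$. For the blocks whose scalar factor is $\Psi_0$, $\Psi_2$, $\tilde\Psi_2$ or $\Psi_4$ this is immediate once one observes that the accompanying tensor monomials (such as $n_{\{a}\tilde m_bn_c\tilde m_{d\}}$, $\ell_{\{a}m_b\ell_cm_{d\}}$, and the four-term combinations appearing in $C^{(0,0)^\pm}$) are either fixed by $\sigma$ or interchanged among themselves, using the symmetries $u_{\{a}v_bw_cz_{d\}}=w_{\{a}z_bu_cv_{d\}}$ and $u_{[a}v_{b]}=-v_{[a}u_{b]}$ of the $\{\,\}$-operation, while $\Psi_0$, $\Psi_2$, $\tilde\Psi_2$, $\Psi_4$ are $\sigma$-invariant. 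The only case needing attention is $C^{(1,1)}$ and $C^{(-1,-1)}$: here $\Psi_1\mapsto-\Psi_1$ and $\Psi_3\mapsto-\Psi_3$, but the bracketed tensor structures pick up a compensating overall sign under $\sigma$, so the products are again fixed. This same computation records, as quoted later in the paper, that the scalar ``b.w.\ $(0,0)$ blocks'' $\Psi_2$, $\tilde\Psi_2$, $\Psi_0\Psi_4$, $\Psi_1\Psi_3$, $\Psi_1^2\Psi_4$, $\Psi_0\Psi_3^2$ are each invariant under \eref{eq:pair_interchange2} (and, having boost weight $(0,0)$, trivially boost-invariant).

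Finally, let $T_{ab}$ be any symmetric rank-2 tensor polynomial in the $C^{(b,b)}$ and the metric, and set $T_{\ell n}=T_{ab}\ell^an^b$, $T_{m\tilde m}=T_{ab}m^a\tilde m^b$. Since $\sigma$ is an isometry it preserves all index contractions and every factor is $\sigma$-invariant, so $\sigma^*T=T$, i.e.\ $T(\sigma X,\sigma Y)=T(X,Y)$ for all $X,Y$. Taking $X=\bl$, $Y=\bn$,
\begin{equation*}
  T_{\ell n}=T_{ab}(\sigma\ell)^a(\sigma n)^b=-\,T_{ab}m^a\tilde m^b=-\,T_{m\tilde m}.
\end{equation*}
Now the boost-weight $(0,0)$ part of a symmetric rank-2 tensor is determined entirely by $T_{\ell n}$ and $T_{m\tilde m}$, every other frame component of $T_{ab}$ carrying nonzero boost weight; hence $T^{(0,0)}_{ab}=2T_{\ell n}\,\ell_{(a}n_{b)}+2T_{m\tilde m}\,m_{(a}\tilde m_{b)}$. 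Combined with $g_{ab}=2\ell_{(a}n_{b)}-2m_{(a}\tilde m_{b)}$ and the relation $T_{\ell n}=-T_{m\tilde m}$ just obtained, this gives $T^{(0,0)}_{ab}=T_{\ell n}\,g_{ab}$, proportional to the metric. (If in addition all $\Psi_i$ and $\tilde\Psi_2$ are constant, then $T_{\ell n}$, being a polynomial in the b.w.\ $(0,0)$ blocks, is constant, so $T^{(0,0)}$ is a constant multiple of $g$.) Everything in this argument is formal except the block-by-block verification of $\sigma$-invariance in the second paragraph; that is the one place where care is required --- principally the sign bookkeeping for $C^{(1,1)}$, $C^{(-1,-1)}$ and the four-term structures in $C^{(0,0)^\pm}$ --- but I expect no genuine obstruction there.
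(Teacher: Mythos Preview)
Your overall strategy --- exploit the discrete involution to force $T_{\ell n}=-T_{m\tilde m}$ --- is the same as the paper's. However, there is a genuine error in your central claim that $\sigma^*C^{(\pm 1,\pm 1)}=C^{(\pm 1,\pm 1)}$. You set up $\sigma$ as a pointwise linear map of the tangent space; under such a map, scalars at a point are inert, so $\Psi_1$ and $\Psi_3$ do \emph{not} transform. Only the frame-vector structures do, and for $C^{(1,1)}$ the bracketed tensor $n_{\{a}\ell_bn_c\tilde m_{d\}}+n_{\{a}\tilde m_bm_c\tilde m_{d\}}$ picks up an overall minus sign under $\sigma^*$ (three of the four covectors acquire a sign), with nothing to compensate: $\sigma^*C^{(\pm1,\pm1)}=-C^{(\pm1,\pm1)}$. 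Your phrase ``$\Psi_1\mapsto-\Psi_1$ \dots\ but the bracketed tensor structures pick up a compensating overall sign'' mixes the active linear map $\sigma$ with the passive frame relabelling \eref{eq:pair_interchange2}: in the passive picture the Weyl scalars \emph{and} the basis tensors both change (and cancel), which is why the paper can say ``all $C^{(b,b)}$ remain unchanged'' there --- but that cancellation does not carry over to pullback by a fixed linear map, where $\Psi_1$ is just a number.

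The gap is repairable. Any monomial in the $C^{(b,b)}$ whose $(\ell,n)$-component is nonzero must have $\sum b_j=0$; writing $a_{\pm1},a_{\pm2}$ for the multiplicities, $2a_2+a_1-a_{-1}-2a_{-2}=0$ forces $a_1\equiv a_{-1}\pmod 2$, so $a_1+a_{-1}$ is even and the minus signs from $\sigma^*C^{(\pm1,\pm1)}$ cancel. Hence $\sigma^*T^{(0,0)}=T^{(0,0)}$ and your final step goes through. This parity observation is exactly what the paper's second half supplies in scalar language: it notes that $S_{01}$ is a polynomial in the boost-weight $(0,0)$ blocks $\Psi_2,\tilde\Psi_2,\Psi_1\Psi_3,\Psi_0\Psi_4,\Psi_1^2\Psi_4,\Psi_0\Psi_3^2$, all even in $(\Psi_1,\Psi_3)$ and hence invariant under the sign flip. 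Once you add that line, your argument and the paper's coincide.
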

\begin{proof}
  The b.w.\ (0,0) part of a symmetric rank-2 tensor takes the form
  $S^{(0,0)}_{ab} = S_{01} \, \ell_{(a} n_{b)} + S_{23} \, m_{(a} \tilde m_{b)}$,
  which need not to be proportional to the metric in general. Since $C^{(b,b)}$
  are invariant under \eref{eq:pair_interchange2}, any symmetric rank-2 tensor
  $S^{(0,0)}$ polynomial in $C^{(b,b)}$ is thus invariant as well, implying
  $S_{01} \mapsto - S_{23}$. On the other hand, the components of such $S^{(0,0)}$
  are constructed only from b.w.\ (0,0) blocks
  \begin{equation}
    \Psi_2, \tilde\Psi_2, \Psi_1\Psi_3, \Psi_0\Psi_4, \Psi_1^2\Psi_4, \Psi_0\Psi_3^2
    \label{eq:bw(0,0)_blocks}
  \end{equation}
  and their arbitrary combinations. All these blocks are invariant under
  \eref{eq:pair_interchange2} and therefore $S_{01} \mapsto S_{01}$. It immediately
  follows that $S_{01} = - S_{23}$ and thus $S^{(0,0)}$ is proportional to the metric.
\end{proof}

\begin{lemma}
  The b.w.\ $(-1,-1)$ and $(1,1)$ parts of a rank-2 tensor with components changing
  the sign under \eref{eq:pair_interchange2} are anti-symmetric.
  \label{lemma:T(1,1)}
\end{lemma}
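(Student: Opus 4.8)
The plan is to reduce the claim to the elementary fact that, inside the two-dimensional space of rank-$2$ tensors of fixed admissible boost weight $(1,1)$ or $(-1,-1)$, the discrete frame change \eref{eq:pair_interchange2} leaves the symmetric element fixed and reverses the sign of the antisymmetric element; the hypothesis then forces the symmetric part to vanish.

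First I would spell out the action of \eref{eq:pair_interchange2}. Reading the interchange $(\bn,\bl)\leftrightarrow(-\btm,\bm)$ slot by slot gives $\bl\mapsto\bm$, $\bm\mapsto\bl$, $\bn\mapsto-\btm$, $\btm\mapsto-\bn$, hence on the dual covectors $\ell_a\mapsto m_a$, $m_a\mapsto\ell_a$, $n_a\mapsto-\tilde m_a$, $\tilde m_a\mapsto-n_a$; using $\ell_an^a=1$ and $m_a\tilde m^a=-1$ one checks in one line that the normalization, and thus the form \eref{eq:frame:metric} of the metric, is preserved, so this is a legitimate frame change.

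Next I would write the two parts explicitly. The b.w.\ $(1,1)$ part of a rank-$2$ tensor $T$ is carried by the two scalar components $T_{ab}\ell^am^b$ and $T_{ab}m^a\ell^b$, equivalently it is a combination of $n_a\tilde m_b$ and $\tilde m_a n_b$, so it decomposes uniquely as
\[
  T^{(1,1)}_{ab}=p\,n_{(a}\tilde m_{b)}+q\,n_{[a}\tilde m_{b]},
\]
and analogously $T^{(-1,-1)}_{ab}=p'\,\ell_{(a}m_{b)}+q'\,\ell_{[a}m_{b]}$. From the substitution of the first step, \eref{eq:pair_interchange2} merely swaps $n_a\tilde m_b\leftrightarrow\tilde m_a n_b$ and $\ell_a m_b\leftrightarrow m_a\ell_b$; therefore the symmetric tensors $n_{(a}\tilde m_{b)}$ and $\ell_{(a}m_{b)}$ are invariant, while the antisymmetric tensors $n_{[a}\tilde m_{b]}$ and $\ell_{[a}m_{b]}$ change sign. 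This is the same pattern used in the proof of Lemma~\ref{lemma:C(0,0)^k}, where $\ell_{(a}n_{b)}\mapsto-m_{(a}\tilde m_{b)}$, and I would use it as a consistency check.

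Finally I would combine the two observations. Since $T^{(1,1)}$ and $T^{(-1,-1)}$ are intrinsic tensors, expanding them in the transformed frame and using the invariances just found shows that the coefficients transform as $p\mapsto p$, $q\mapsto-q$, $p'\mapsto p'$, $q'\mapsto-q'$. On the other hand, the hypothesis that all components of $T$ change sign forces $p\mapsto-p$, $q\mapsto-q$, $p'\mapsto-p'$, $q'\mapsto-q'$. Comparing the two gives $p=-p$ and $p'=-p'$, hence $p=p'=0$, so only the antisymmetric pieces $q\,n_{[a}\tilde m_{b]}$ and $q'\,\ell_{[a}m_{b]}$ remain, which is the assertion. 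Equivalently: \eref{eq:pair_interchange2} interchanges the frame components $T_{ab}\ell^am^b$ and $T_{ab}m^a\ell^b$, so if both change sign they must be negatives of each other, which is exactly antisymmetry of the $(1,1)$ part, and similarly for $(-1,-1)$. I do not expect a genuine obstacle here: the whole argument is bookkeeping, and the only delicate point is keeping the minus signs in the covector substitution straight together with the fact that it is the \emph{symmetric} basis element, not the antisymmetric one, that is invariant.
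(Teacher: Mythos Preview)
Your argument is correct and essentially identical to the paper's: the paper writes $T^{(1,1)}_{ab}=T_{02}\,n_a\tilde m_b+T_{20}\,\tilde m_a n_b$, observes that \eref{eq:pair_interchange2} swaps $T_{02}\leftrightarrow T_{20}$ while by hypothesis $T_{02}\mapsto-T_{02}$, and concludes $T_{02}=-T_{20}$ (and similarly for b.w.\ $(-1,-1)$). Your symmetric/antisymmetric repackaging is just a cosmetic variant, and your ``Equivalently'' clause at the end is exactly the paper's proof verbatim.
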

\begin{proof}
  The b.w.\ $(-1,-1)$ and $(1,1)$ parts of a rank-2 tensor $T_{ab}$ can be written as
  $T^{(1,1)}_{ab} = T_{02} \, n_a \tilde m_b + T_{20} \, \tilde m_a n_b$ and 
  $T^{(-1,-1)}_{ab} = T_{13} \, \ell_a m_b + T_{31} \, m_a \ell_b$, respectively,
  where $T_{02} = - T_{ab} \, \ell^a m^b$, $T_{20} = - T_{ab} \, m^a \ell^b$, etc.
  Obviously, $T_{02} \leftrightarrow T_{20}$ and $T_{13} \leftrightarrow T_{31}$
  under \eref{eq:pair_interchange2}.
  On the other hand, $T_{02} \mapsto - T_{02}$ and $T_{13} \mapsto - T_{13}$ by assumption.
  Hence, $T_{02} = - T_{20}$ and $T_{13} = - T_{31}$.
\end{proof}

In the other case of the discrete frame transformation \eref{eq:pair_interchange},
one can analogously prove:

\begin{lemma}
  The b.w.\ $(-1,1)$ part of a rank-2 tensor with components changing the sign
  under \eref{eq:pair_interchange} is anti-symmetric.
  \label{lemma:T(-1,1)}
\end{lemma}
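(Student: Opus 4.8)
The plan is to mirror the argument used for Lemma \ref{lemma:T(1,1)}, replacing the transformation \eref{eq:pair_interchange2} by \eref{eq:pair_interchange} and working with the b.w.\ $(-1,1)$ block instead of the $(1,1)$ and $(-1,-1)$ blocks. First I would write down the most general b.w.\ $(-1,1)$ part of a rank-2 tensor $T_{ab}$ in the frame. The only frame contractions of boost weight $(-1,1)$ are $T_{ab}n^am^b$ and $T_{ab}m^an^b$, so this part is carried by exactly two independent coefficients, say $T_{nm}=T_{ab}n^am^b$ and $T_{mn}=T_{ab}m^an^b$, and can be reconstructed as $T^{(-1,1)}_{ab} = -T_{nm}\,\ell_a\tilde m_b - T_{mn}\,\tilde m_a\ell_b$, the dual structures $\ell_a\tilde m_b$ and $\tilde m_a\ell_b$ being the ones of complementary boost weight $(1,-1)$ that reproduce the coefficients upon contraction (using $\tilde m_b m^b=-1$). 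Anti-symmetry of this part is then equivalent to the single relation $T_{mn}=-T_{nm}$.

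Next I would compute the action of \eref{eq:pair_interchange} on the frame vectors. Reading $(\bn,\bl)\leftrightarrow(-\bm,\btm)$ as an involution of the null tetrad gives $\ell^a\mapsto\tilde m^a$, $\tilde m^a\mapsto\ell^a$, $n^a\mapsto -m^a$, $m^a\mapsto -n^a$; one checks these preserve the normalization and hence the form \eref{eq:frame:metric} of the metric, so they furnish a genuine change of frame. Applying this relabelling to the defining contraction yields $T_{ab}n^am^b\mapsto T_{ab}(-m^a)(-n^b)=T_{ab}m^an^b$, that is, the change of frame simply interchanges the two coefficients, $T_{nm}\leftrightarrow T_{mn}$, exactly as \eref{eq:pair_interchange2} interchanged $T_{02}\leftrightarrow T_{20}$ in the proof of Lemma \ref{lemma:T(1,1)}.

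Finally I would combine this frame-structure identity with the hypothesis. Since $T_{ab}$ is a fixed geometric tensor, the value of a component computed in the transformed frame equals the contraction with the transformed frame vectors, so the change of frame sends $T_{nm}$ to $T_{mn}$. On the other hand, the standing assumption is that these components, regarded as polynomials in the Weyl scalars and transformed according to \eref{eq:pair_interchange} (where $\Psi_0\leftrightarrow\Psi_4$, $\Psi_1\leftrightarrow\Psi_3$, $\tilde\Psi_3\mapsto-\tilde\Psi_3$, while $\Psi_2$ and $\tilde\Psi_2$ are fixed), pick up an overall minus sign, i.e.\ $T_{nm}\mapsto -T_{nm}$. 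Equating the two descriptions of the same transformed quantity gives $T_{mn}=-T_{nm}$, whence $T^{(-1,1)}_{ab}$ is anti-symmetric.

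The step I expect to be the main obstacle is the bookkeeping in the last paragraph: one must keep apart the two logically distinct effects of \eref{eq:pair_interchange} — the relabelling of frame indices, which permutes the coefficients, and the substitution of transformed Weyl scalars, which by hypothesis supplies the sign — and check that the signs introduced by the odd entries $n\mapsto -m$, $m\mapsto -n$ and by the dual pairing cancel consistently, so that the relabelling produces precisely the clean interchange $T_{nm}\leftrightarrow T_{mn}$ without a spurious sign. Once this is verified the conclusion is immediate and the argument is formally identical to that of Lemma \ref{lemma:T(1,1)}.
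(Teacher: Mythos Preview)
Your proposal is correct and is precisely the ``analogous'' argument the paper alludes to: the paper gives no explicit proof of Lemma~\ref{lemma:T(-1,1)} beyond the remark that it follows from the proof of Lemma~\ref{lemma:T(1,1)} with \eref{eq:pair_interchange} in place of \eref{eq:pair_interchange2}. Your identification of the b.w.\ $(-1,1)$ components $T_{nm}=T_{ab}n^am^b$, $T_{mn}=T_{ab}m^an^b$, the clean interchange $T_{nm}\leftrightarrow T_{mn}$ under $n\mapsto -m$, $m\mapsto -n$, and the resulting relation $T_{mn}=-T_{nm}$ match the intended proof exactly, including the sign bookkeeping you flagged as the potential obstacle.
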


\begin{lemma}
  The b.w.\ $(-2,0)$ and $(0,2)$ parts of a rank-2 tensor $T_{ab}$ vanish
  if the components $T_{11} = T_{ab} \, n^a n^b$ and $T_{22} = T_{ab} \, m^a m^b$
  transform as $T_{11} \leftrightarrow - T_{22}$ under \eref{eq:pair_interchange}.
  \label{lemma:T(0,2)}
\end{lemma}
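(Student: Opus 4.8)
The plan is to mirror the structure of Lemma \ref{lemma:T(1,1)} (and its variant Lemma \ref{lemma:T(-1,1)}), adapting the argument to the $(-2,0)$ and $(0,2)$ blocks, where the new feature is that these components are \emph{diagonal} rather than off-diagonal, so the discrete symmetry \eref{eq:pair_interchange} forces them to \emph{vanish} outright rather than merely to be antisymmetric. First I would write down the $(0,2)$ and $(-2,0)$ parts of a symmetric rank-2 tensor $T_{ab}$ in the null frame. By the boost-weight bookkeeping of Figure \ref{fig:Ricci}, the only frame bilinears of boost weight $(0,2)$ are $m_a m_b$ and of boost weight $(-2,0)$ are $n_a n_b$; hence
\begin{equation}
  T^{(0,2)}_{ab} = T_{22} \, m_a m_b, \qquad
  T^{(-2,0)}_{ab} = T_{11} \, n_a n_b,
\end{equation}
with $T_{22} = T_{ab}\, m^a m^b$ and $T_{11} = T_{ab}\, n^a n^b$ as in the statement.

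Next I would track the effect of the pair interchange \eref{eq:pair_interchange}, namely $(\bn,\bl)\leftrightarrow(-\bm,\btm)$. Under this map the frame monomial $n_a n_b$ is sent to $m_a m_b$ (the two sign factors from $\bn\mapsto-\bm$ cancel in the product) and, symmetrically, $m_a m_b \mapsto n_a n_b$. Since \eref{eq:pair_interchange} preserves the normalization and hence the metric \eref{eq:frame:metric}, and since the transformation simply relabels a frame without changing the underlying tensor $T_{ab}$, comparing the two expressions for $T_{ab}$ before and after the relabeling gives $T^{(0,2)}\leftrightarrow T^{(-2,0)}$ as tensor blocks, i.e.\ the coefficient carried along $m_am_b$ after the transformation equals the old coefficient along $n_an_b$. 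Concretely this yields the scalar identity that the transformed $T_{22}$ equals the old $T_{11}$, i.e.\ $T_{11}\leftrightarrow T_{22}$ \emph{as a consequence of the frame geometry alone}.

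Now I would invoke the hypothesis: by assumption the components transform as $T_{11}\leftrightarrow -T_{22}$. Combining this with the geometric identity $T_{11}\leftrightarrow T_{22}$ from the previous step, one concludes $T_{11} = -T_{11}$ and $T_{22}=-T_{22}$, hence $T_{11}=T_{22}=0$, so both blocks $T^{(0,2)}_{ab}$ and $T^{(-2,0)}_{ab}$ vanish identically. The only subtlety — and the step I expect to need the most care — is reconciling the two sign rules: the geometric relabeling says $T_{11}\leftrightarrow T_{22}$ with a $+$ sign, while the hypothesis supplies the $-$ sign, and it is precisely the clash of the two that kills the components; one must be careful that the sign picked up from $\bn\mapsto-\bm$ in the \emph{bilinear} $n_an_b$ is indeed $(-1)^2=+1$, so no spurious minus enters on the geometric side. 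Everything else is the same boost-weight/frame-relabeling mechanism already used in Lemmas \ref{lemma:T(1,1)} and \ref{lemma:T(-1,1)}, so the write-up can be kept parallel and short.
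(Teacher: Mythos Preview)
Your approach is exactly the one the paper intends (it merely says ``analogously'' after Lemma~\ref{lemma:T(1,1)}): track how the frame relabeling \eref{eq:pair_interchange} acts on the two scalar components $T_{11}=T_{ab}n^an^b$ and $T_{22}=T_{ab}m^am^b$, obtain the geometric identity $T_{11}\leftrightarrow T_{22}$ (since $n\mapsto -m$, $m\mapsto -n$ and the signs square away), and confront it with the hypothesis $T_{11}\leftrightarrow -T_{22}$ to force both to vanish. That part is correct and complete.

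One slip to fix in the write-up: your displayed decomposition is inverted relative to the paper's convention. In the convention of Lemma~\ref{lemma:T(1,1)} (where $T^{(1,1)}_{ab}=T_{02}\,n_a\tilde m_b+\dots$), the label $(b_1,b_2)$ refers to the boost weight of the \emph{scalar} component, so the accompanying covector bilinear carries the opposite weight. Hence
\[
  T^{(0,2)}_{ab}=T_{22}\,\tilde m_a\tilde m_b,\qquad
  T^{(-2,0)}_{ab}=T_{11}\,\ell_a\ell_b,
\]
not $T_{22}\,m_am_b$ and $T_{11}\,n_an_b$ (indeed $m_am_b\,m^am^b=0$, so your formula cannot reproduce $T_{22}$). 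This does not affect your argument, because you ultimately read off $T_{11}\leftrightarrow T_{22}$ directly from the contractions rather than from the basis bilinears; but the displayed decomposition should be corrected. With that adjustment the proof is fine and parallel to Lemmas~\ref{lemma:T(1,1)}--\ref{lemma:T(-1,1)}.
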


\section{Einstein Walker metrics with a constant self-dual part of the Weyl tensor}
\label{appendix:DC}

In this Appendix, we first show that higher order directional derivatives
of spin coefficients can be rewritten as a sum of products of first derivatives
of spin coefficients.

\begin{lemma}
  For any $k$-dual-balanced spin coefficient $\mu$ of Einstein Walker metrics
  with a constant self-dual part of the Weyl tensor and a positive integer $n$,
  it holds that $\{\calD^{n} \calD'^{n-1}\} \mu = \sum (\calD\xi)^{n-1} \calD\mu$,
  where the sum may be empty.
  \label{lemma:DD'mu}
\end{lemma}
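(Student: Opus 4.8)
The plan is to prove Lemma \ref{lemma:DD'mu} by induction on $n$, using the commutators \eref{eq:Walker:commutators:1}--\eref{eq:Walker:commutators:2} to push all $\calD'$-type derivatives to the right past the $\calD$-type derivatives, and then invoking $\calD^2\mu = 0$ together with $\calD^2\xi = 0$ (which holds for all $\bdiff = 1$ spin coefficients by \eref{eq:Walker:EinsteinASD:derspincoeff:first}--\eref{eq:Walker:EinsteinASD:derspincoeff:last}) to collapse the higher derivatives.

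For the base case $n = 1$ the statement reads $\{\calD^1\calD'^0\}\mu = \calD\mu$, which is trivially true (the sum has a single term, the identity, and no $\calD\xi$ factors). For the inductive step, I would take an object of the form $\{\calD^n\calD'^{n-1}\}\mu$ and single out one of the $\calD$ derivatives and one adjacent $\calD'$ derivative. Whenever a $\calD$ (i.e.\ $\D$ or $\T$) sits immediately to the left of a $\calD'$ (i.e.\ $\D'$ or $\delta$), I apply one of the relevant commutators from \eref{eq:Walker:commutators:1}--\eref{eq:Walker:commutators:2}, which each have the schematic form $[\calD, \calD'] = \xi\,\calD$ (all four commutators $[\D,\D']$, $[\T,\D']$, $[\D,\delta]$, $[\T,\delta]$ produce only terms $\xi \D$ or $\xi \T$, i.e.\ a $\bdiff=1$ spin coefficient times a $\calD$-derivative, with no $\calD'$ on the right-hand side). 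Reordering in this way, every $\{\calD^n\calD'^{n-1}\}\mu$ becomes a sum of terms of the form $\calD^p\bigl(\xi \cdots \xi\, \calD^q\calD'^{q-1}\mu\bigr)$ or, ultimately, $\calD^{n-1}$ acting on a product of $\xi$'s and $\calD\mu$. Expanding with the Leibniz rule and using $\calD^2\xi = \calD^2\mu = 0$ kills any term where more than one $\calD$ lands on a single factor; the surviving terms are exactly products of $(\calD\xi)$ factors times a single $\calD\mu$, giving the claimed form $\sum(\calD\xi)^{n-1}\calD\mu$ with $n-1$ factors of $\calD\xi$ (matching the count of derivatives: $n$ total minus the one on $\mu$). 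One must also check that the commutator coefficients $\xi$ themselves are $\bdiff=1$ and $\calD$-killed after one derivative, which they are.

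The main obstacle I anticipate is bookkeeping: keeping track of which of the two concrete operators $\D$, $\T$ each abstract $\calD$ stands for when commuting past $\D'$ or $\delta$, and verifying that no $\calD'$-type derivative ever survives on the far right of any term after full reordering (this is what makes the final factor $\calD\mu$ rather than something involving $\D'\mu$ or $\delta\mu$). Since $\mu$ is $k$-dual-balanced and $\bdiff(\mu)$ is either $1$ or $3$, one can also argue more cheaply: apply Lemma \ref{lemma:covd} to see that repeated covariant differentiation preserves dual-balance, so only $\calD$-derivatives matter for the low-$\bdiff$ components, and then the commutator manipulation above simply rearranges a fixed number of $\calD$-derivatives. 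The only real content is that the commutators never reintroduce $\calD'$ and that $\calD^2$ annihilates everything in sight; both are direct consequences of \eref{eq:Walker:commutators:1}--\eref{eq:Walker:commutators:2} and \eref{eq:Walker:EinsteinASD:derspincoeff:first}--\eref{eq:Walker:EinsteinASD:derspincoeff:last}.
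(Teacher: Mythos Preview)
Your approach is essentially the same as the paper's---induction on $n$ using the commutators $[\calD,\calD']=\xi\calD$ together with $\calD^2\xi=0$ and $\calD^2\mu=0$---but your write-up has two issues worth fixing.

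First, a directional slip: you say you push $\calD'$ to the \emph{right}, yet the manipulation you describe (replacing $\calD\calD'$ by $\calD'\calD+\xi\calD$) pushes $\calD'$ to the \emph{left}. The paper in fact commutes the \emph{leftmost} $\calD'$ all the way to the left, which is crucial: since only $\calD$'s sit to its left, the inserted $\xi$'s are acted on only by $\calD$'s, and $\calD^2\xi=0$ controls the Leibniz expansion. If you push $\calD'$'s past one another carelessly, you can generate $\calD'\xi$ terms that are not obviously controlled.

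Second, and more substantively, after commuting the leftmost $\calD'$ out you are left with a leading term $\calD'\{\calD^{l+1}\calD'^{l-1}\}\mu$ and correction terms containing $\xi\{\calD^{l+1}\calD'^{l-1}\}\mu$. To kill these you need the preliminary vanishing $\{\calD^{n+1}\calD'^{n-1}\}\mu=0$; your observation $\calD^2\mu=0$ handles only the fully-sorted string $\calD'^{n-1}\calD^n\mu$, not a general permutation. The paper obtains this vanishing in one line from Lemmas~\ref{lemma:D&T} and~\ref{lemma:D'&delta}: any permutation of directional derivatives of a $k$-dual-balanced scalar is again $k$-dual-balanced, and here $\bdiff(\{\calD^{n+1}\calD'^{n-1}\}\mu)=\bdiff(\mu)-2\le k$. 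You gesture at this via Lemma~\ref{lemma:covd}, but that lemma is about covariant derivatives; the directional-derivative version (Lemmas~\ref{lemma:D&T}--\ref{lemma:D'&delta}) is what you actually need, and it should be stated up front as the anchor of the induction rather than as an afterthought.
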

\begin{proof}
  We prove the lemma by induction. But first, note that
  \begin{equation}
    \{\calD^{n+1} \calD'^{n-1}\} \mu = 0
    \label{eq:lemma:DD'mu}
  \end{equation}
  since Lemmas \ref{lemma:D&T} and \ref{lemma:D'&delta} imply that the left hand side is $k$-dual-balanced
  and $\bdiff(\{\calD^{n+1} \calD'^{n-1}\} \mu) = \bdiff(\mu) - 2 \le k$ is met for both groups of spin coefficients
  \eref{eq:Walker:EinsteinASD:spincoeff:-1db} and \eref{eq:Walker:EinsteinASD:spincoeff:1db}
  of Einstein Walker metrics.
  It immediately follows that terms $\{\calD^n \calD'^{n-1}\} \mu$ starting with $\calD'$ vanish
  since the trailing part is of the form \eref{eq:lemma:DD'mu}.

  The step $n = 1$ is trivial.
  Next, let us assume the statement holds for $n = l$
  and show that it then also holds for $n = l + 1$ by commuting the outermost $\calD'$ to the left.
  Using \eref{eq:lemma:DD'mu}, $\calD^2\xi = 0$, the commutators
  \eref{eq:Walker:commutators:1}--\eref{eq:Walker:commutators:2}
  \begin{equation}
    [\calD,\calD'] = \sum \xi \calD
    \label{eq:lemma:DD'mu:commutators}
  \end{equation}
  and substituting the assumption, it follows that
  \begin{eqnarray}
    \fl
    \{\calD^{l+1} \calD'^l\} \mu &= \calD^m \calD' \{ \calD^{l+1-m} \calD'^{l-1} \}\mu \nonumber \\
    \fl
    &= \calD^{m-1} \calD' \calD \{ \calD^{l+1-m} \calD'^{l-1} \}\mu + \calD^{m-1} \left( \sum \xi \calD \{ \calD^{l+1-m} \calD'^{l-1} \}\mu \right) \nonumber \\
    \fl
    &= \calD^{m-1} \calD' \calD \{ \calD^{l+1-m} \calD'^{l-1} \}\mu + \sum \xi \{ \calD^{l+1} \calD'^{l-1} \}\mu
    + \sum \calD\xi \{ \calD^l \calD'^{l-1} \}\mu \nonumber \\
    \fl
    &= \calD^{m-1} \calD' \calD \{ \calD^{l+1-m} \calD'^{l-1} \}\mu + \sum (\calD \xi)^l \calD\mu.
  \end{eqnarray}
  Commuting $\calD'$ repeatedly, we finally get
  \begin{eqnarray}
    \fl
    \{\calD^{l+1} \calD'^l\} \mu &= \calD \calD' \calD^{m-1} \{ \calD^{l+1-m} \calD'^{l-1} \}\mu
    + \sum (\calD \xi)^l \calD\mu \nonumber \\
    \fl
    &= \calD' \calD^m \{ \calD^{l+1-m} \calD'^{l-1} \}\mu + \sum \xi \calD^m \{ \calD^{l+1-m} \calD'^{l-1} \}\mu
    + \sum (\calD \xi)^l \calD\mu \nonumber \\
    \fl
    &= \calD' \{ \calD^{l+1} \calD'^{l-1} \}\mu + \sum \xi \{ \calD^{l+1} \calD'^{l-1} \}\mu
    + \sum (\calD \xi)^l \calD\mu
  \end{eqnarray}
  and thus
  \begin{equation}
    \{\calD^{l+1} \calD'^l\} \mu = \sum (\calD \xi)^l \calD\mu
  \end{equation}  
\end{proof}

\begin{lemma}
  For $\bdiff = 1$ spin coefficients $\xi$ of Einstein Walker metrics with a constant self-dual
  part of the Weyl tensor and a positive integer $n$, it holds that
  $\{\calD^{n-1} \calD'^{n-1}\} \xi = \sum (\calD\xi)^{n-1} \xi$, where the sum
  may be empty.
  \label{lemma:DD'xi}
\end{lemma}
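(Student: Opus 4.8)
The plan is to prove the identity by induction on $n$, in close parallel with the proof of Lemma~\ref{lemma:DD'mu}. Writing $m=n-1\ge 0$ for the common number of $\calD$ and $\calD'$ operators, the base case $m=0$ is trivial, $\{\calD^{0}\calD'^{0}\}\xi=\xi=\sum(\calD\xi)^{0}\xi$. For the inductive step I would assume the claim for $m-1$ (with $m\ge 1$) and establish it for $m$, working with an arbitrary permutation of $m$ copies of $\calD$ and $m$ copies of $\calD'$ acting on a fixed $\bdiff=1$ spin coefficient $\xi$.

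The ingredients I would feed in are threefold. First, from \eref{eq:Walker:EinsteinASD:derspincoeff:first}--\eref{eq:Walker:EinsteinASD:derspincoeff:last} each of $\D\xi$ and $\T\xi$ is a linear combination of the constants $\Psi_i$ and $\lambda$, so $\calD\xi$ is a genuine constant, annihilated by $\calD$ and by $\calD'$, and in particular $\calD^{2}\xi=0$. Second, the commutator relation $[\calD,\calD']=\sum\xi\calD$ of \eref{eq:lemma:DD'mu:commutators}. Third, Lemma~\ref{lemma:DD'mu} applied to $\mu=\xi$, which is $-1$-dual-balanced by \eref{eq:Walker:EinsteinASD:spincoeff:-1db}; it gives $\{\calD^{m}\calD'^{m-1}\}\xi=\sum(\calD\xi)^{m-1}\calD\xi=\sum(\calD\xi)^{m}$, a product of constants. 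Given an arbitrary arrangement, I let $a\ge 0$ be the number of $\calD$'s preceding its leftmost $\calD'$, so the arrangement reads $\calD^{a}\calD'R$ with $R=\{\calD^{m-a}\calD'^{m-1}\}\xi$ and $0\le a\le m$. Commuting the distinguished $\calD'$ leftward past all $a$ copies of $\calD$ produces a ``main'' term $\calD'\calD^{a}R$ together with finitely many correction terms, each obtained by applying the commutator once and then distributing the remaining $\calD$'s by the Leibniz rule; since $\calD^{2}\xi=0$, only two patterns of correction survive, namely $\xi\,\{\calD^{m}\calD'^{m-1}\}\xi$ and $(\calD\xi)\,\{\calD^{m-1}\calD'^{m-1}\}\xi$. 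The former is $\sum(\calD\xi)^{m}\xi$ by Lemma~\ref{lemma:DD'mu}, and the latter is $(\calD\xi)\sum(\calD\xi)^{m-1}\xi=\sum(\calD\xi)^{m}\xi$ by the inductive hypothesis. The main term is $\calD'(\calD^{a}R)=\calD'(\{\calD^{m}\calD'^{m-1}\}\xi)=\calD'(\sum(\calD\xi)^{m})=0$, because a product of constants is killed by $\calD'$. Collecting, $\{\calD^{m}\calD'^{m}\}\xi=\sum(\calD\xi)^{m}\xi$, which closes the induction.

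I expect the only real difficulty to be combinatorial bookkeeping: tracking all correction terms spawned while moving $\calD'$ to the left, and verifying after the Leibniz expansions that each one reduces to a quantity with exactly $m$ factors $\calD\xi$ and a single trailing factor $\xi$, so that nothing of an unexpected boost-weight difference can appear. The substantive — as opposed to merely formal — input is the fact that $\calD\xi$ is an honest constant, not only annihilated by $\D$ and $\T$; this is precisely what makes the main term $\calD'(\calD^{a}R)$ vanish, and it is here that the Einstein and constant-self-dual-Weyl assumptions enter. Everything else is a mechanical application of the commutators and of Lemma~\ref{lemma:DD'mu}.
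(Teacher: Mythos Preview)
Your proof is correct and follows essentially the same route as the paper's: induction on the number of derivative pairs, commuting the leftmost $\calD'$ all the way to the left via $[\calD,\calD']=\sum\xi\,\calD$ and $\calD^{2}\xi=0$, then handling the two surviving correction patterns by Lemma~\ref{lemma:DD'mu} and the inductive hypothesis, respectively. The paper likewise disposes of the main term $\calD'\{\calD^{m}\calD'^{m-1}\}\xi$ by invoking Lemma~\ref{lemma:DD'mu} together with the constancy of $\calD\xi$, exactly as you do.
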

\begin{proof}
  The case $n = 1$ is trivial. Similarly as in Lemma \ref{lemma:DD'mu}, we prove
  the induction step $n = l + 1$ by commuting the outermost $\calD'$ to the left.
  Using \eref{eq:lemma:DD'mu}, $\calD^2\xi = 0$, the commutators \eref{eq:lemma:DD'mu:commutators}
  and substituting the assumption that the statement holds for $n = l$, we get
  \begin{eqnarray}
    \fl
    \{\calD^{l} \calD'^l\} \xi &= \calD^m \calD' \{ \calD^{l-m} \calD'^{l-1} \}\xi \nonumber \\
    \fl
    &= \calD^{m-1} \calD' \calD \{ \calD^{l-m} \calD'^{l-1} \}\xi + \calD^{m-1} \left( \sum \xi \calD \{ \calD^{l-m} \calD'^{l-1} \}\xi \right) \nonumber \\
    \fl
    &= \calD^{m-1} \calD' \calD \{ \calD^{l-m} \calD'^{l-1} \}\xi + \sum \xi \{ \calD^{l} \calD'^{l-1} \}\xi
    + \sum \calD\xi \{ \calD^{l-1} \calD'^{l-1} \}\xi \nonumber \\
    \fl
    &= \calD^{m-1} \calD' \calD \{ \calD^{l-m} \calD'^{l-1} \}\xi + \sum \xi (\calD \xi)^l.
  \end{eqnarray}
  Commuting $\calD'$ repeatedly leads to
  \begin{eqnarray}
    \{\calD^{l} \calD'^l\} \xi &= \calD \calD' \calD^{m-1} \{ \calD^{l-m} \calD'^{l-1} \}\xi
    + \sum \xi (\calD \xi)^l \nonumber \\
    &= \calD' \calD^m \{ \calD^{l-m} \calD'^{l-1} \}\xi + \sum \xi \calD^m \{ \calD^{l-m} \calD'^{l-1} \}\xi
    + \xi \sum (\calD \xi)^l \nonumber \\
    &= \calD' \{ \calD^{l} \calD'^{l-1} \}\xi + \sum \xi \{ \calD^{l} \calD'^{l-1} \}\xi
    + \sum \xi (\calD \xi)^l.
  \end{eqnarray}
  Finally, employing Lemma \ref{lemma:DD'mu} and using the fact that $\calD\xi$ is constant
  for Einstein Walker metrics with constant self-dual components of the Weyl tensor, we obtain
  \begin{equation}
    \{\calD^{l} \calD'^l\} \xi = \sum \xi (\calD \xi)^l.
  \end{equation}  
\end{proof}

As a consequence of these results, it turns out that relevant components of covariant derivatives
of anti-self dual part of the Weyl tensor can be written as a sum of products of Weyl tensor
components and first derivatives of spin coefficients:

\begin{lemma}
  For Einstein Walker metrics with a constant self-dual part of the Weyl tensor,
  $\bdiff = 2$ components of $\nabla^{(2n)}C_{abcd}^{(0,0)^-}$ and $\nabla^{(2n)}C_{abcd}^{(-1,1)}$
  take the form $\tilde\Psi_2 \tilde\Psi_3 \sum(\calD\xi)^{n-1}$
  and $\tilde\Psi_3 \sum(\calD\xi)^n$, respectively.
  \label{lemma:DC}
\end{lemma}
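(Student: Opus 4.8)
The plan is to prove Lemma~\ref{lemma:DC} by a direct computation organized around the structure of the covariant derivative in the frame, reducing everything to the directional-derivative formulas and the two previous lemmas. First I would recall that, for Einstein Walker metrics with constant self-dual part, the anti-self-dual blocks $C^{(0,0)^-}$ and $C^{(-1,1)}$ are proportional to $\tilde\Psi_2$ and $\tilde\Psi_3$ respectively (see \eref{eq:C(0,0)-}, \eref{eq:Walker:EinsteinASD:DtildePsi3}), with $\tilde\Psi_2$ constant and $\calD\tilde\Psi_3 = 0$. Taking one covariant derivative $\nabla_a = n_a\D + \ell_a\D' - m_a\T - \tilde m_a\delta$ produces two kinds of terms: the directional derivatives of the frame-component scalars $\tilde\Psi_2$, $\tilde\Psi_3$, and terms where the derivative hits the frame vectors, which by \eref{eq:Walker:derframevectors:first}--\eref{eq:Walker:derframevectors:last} simply multiply the block by a spin coefficient. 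Since $\D\tilde\Psi_3 = \T\tilde\Psi_3 = 0$ and $\tilde\Psi_2$ is constant, any $\D$ or $\T$ hitting a component scalar kills it, while $\D'$ and $\delta$ acting on $\tilde\Psi_3$ (via \eref{eq:Walker:EinsteinASD:DtildePsi4}, \eref{eq:Walker:EinsteinASD:TtildePsi4}) and on the frame vectors introduce $\bdiff=1$ spin coefficients $\xi$ or raise $\tilde\Psi_3$ to $\tilde\Psi_4$.

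Next I would track boost-weight difference to isolate which terms survive into the $\bdiff = 2$ slot. The blocks $C^{(0,0)^-}$ and $C^{(-1,1)}$ have $\bdiff = 0$ and $\bdiff = 1$ respectively; applying $\nabla$ $2n$ times, the relevant output has $\bdiff = 2$, so along the way one has used exactly two "raising" derivatives ($\D'$ or $\delta$) more than "lowering" ones ($\D$ or $\T$) in the $C^{(0,0)^-}$ case, and one more raising than lowering in the $C^{(-1,1)}$ case, after the appropriate bookkeeping with the block's own b.w. Every raising derivative that lands on a frame vector produces a factor of a $\bdiff = 1$ spin coefficient $\xi$ (from \eref{eq:Walker:derframevectors:first}--\eref{eq:Walker:derframevectors:last}, all spin coefficients appearing there in the $\D'$, $\delta$ rows with $\bdiff = 1$), and lowering derivatives either annihilate such a $\xi$-factor times a constant Weyl scalar or, by Lemmas~\ref{lemma:DD'mu} and \ref{lemma:DD'xi}, collapse strings $\{\calD^l\calD'^{l-1}\}$ and $\{\calD^l\calD'^l\}$ applied to spin coefficients into sums of products $(\calD\xi)^{\bullet}\calD\mu$ and $(\calD\xi)^{\bullet}\xi$. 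Because $\calD\xi$ is constant, any further $\calD$ on those factors vanishes, so the only non-trivial residue is a polynomial in the constants $\calD\xi$ times the surviving Weyl scalar. Counting: starting from $C^{(-1,1)}\propto\tilde\Psi_3$, each pair of derivatives contributes one factor $\calD\xi$ (one raising $\to\xi$, one lowering $\to\calD\xi$ after commuting), giving $\tilde\Psi_3\sum(\calD\xi)^n$; starting from $C^{(0,0)^-}\propto\tilde\Psi_2$, the first raising derivative must produce $\tilde\Psi_3$ via $\delta\tilde\Psi_3$-type relations or lift $C^{(0,0)^-}$ to a $C^{(-1,1)}$-shaped term (cf.\ \eref{eq:Einstein:nablaC(0,0)-}), consuming one pair to yield the prefactor $\tilde\Psi_2\tilde\Psi_3$ and leaving $n-1$ pairs, hence $\tilde\Psi_2\tilde\Psi_3\sum(\calD\xi)^{n-1}$.

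I would present the argument inductively on $n$: the base case $n = 1$ is \eref{eq:Einstein:nablaC(0,0)-} and the $C^{(-1,1)}$ analogue, which are read off directly. For the inductive step, apply $\nabla\nabla$ to the previous expression, expand by Leibniz over the product of (constant or $\calD\xi$-built) scalar factors and the frame tensors, use \eref{eq:Walker:derframevectors:first}--\eref{eq:Walker:derframevectors:last} to convert derivatives-of-frame into spin-coefficient factors, use $\calD^2\xi = 0$, $\calD\tilde\Psi_3 = 0$, $\tilde\Psi_2$ constant and the commutators \eref{eq:Walker:commutators:1}--\eref{eq:Walker:commutators:2} to push $\D$, $\T$ past $\D'$, $\delta$, and invoke Lemmas~\ref{lemma:DD'mu}, \ref{lemma:DD'xi} to normalize the derivative strings. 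The $\bdiff = 2$ projection then has exactly the claimed shape. The main obstacle will be the combinatorial bookkeeping of which derivative hits which factor and keeping the boost-weight-difference ledger exact, so that one does not miss terms at the $\bdiff = 2$ level nor accidentally include terms that actually sit at $\bdiff > 2$ (and hence have been killed by the 2-dual-balanced property established in \Sref{section:EinsteinCSI}); organizing the expansion strictly by $\bdiff$ and discarding everything with $\bdiff > 2$ from the outset should keep this manageable.
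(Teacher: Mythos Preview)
Your proposal is correct in substance and relies on the same ingredients as the paper—the dual-balanced bookkeeping, the commutators, and Lemmas~\ref{lemma:DD'mu} and~\ref{lemma:DD'xi}—but organizes them as an induction on $n$, whereas the paper argues structurally. The paper first observes that, since $\tilde\Psi_3$ is 1-dual-balanced and $\calD\tilde\Psi_3=0$, the only surviving $\bdiff=2$ expression among $\{\calD^k\calD'^k\}\tilde\Psi_3$ is $\tilde\Psi_3$ itself (commute the innermost $\calD$ right); and since the non-vanishing spin coefficients fall into the $\bdiff=1$ group $\xi$ and the 1-dual-balanced $\bdiff=3$ pair $\tilde\sigma',\tilde\kappa'$, every $\bdiff=2$ component of $\nabla^{(2n)}C^{(-1,1)}$ is forced to be $\tilde\Psi_3$ times products of $\bdiff=0$ blocks $\{\calD^l\calD'^{l-1}\}\xi$, which Lemma~\ref{lemma:DD'mu} collapses to $(\calD\xi)^n$. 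For $C^{(0,0)^-}$ the paper starts from \eref{eq:Einstein:nablaC(0,0)-}, noting that $\nabla C^{(0,0)^-}$ has only the $\bdiff=3$ scalars $\tilde\Psi_2\tilde\sigma'$ and $\tilde\Psi_2\tilde\kappa'$; the $\bdiff=2$ piece then requires one factor $\{\calD^k\calD'^{k-1}\}\tilde\sigma'$ or $\{\calD^k\calD'^{k-1}\}\tilde\kappa'$, which Lemma~\ref{lemma:DD'mu} together with $\T\tilde\sigma'=\D\tilde\kappa'=-\tilde\Psi_3$ reduces to $\tilde\Psi_3(\calD\xi)^{k-1}$, times the remaining $\bdiff=0$ blocks. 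Your inductive route reaches the same conclusion but carries more Leibniz bookkeeping; the paper's top-down enumeration of admissible $\bdiff=2$ building blocks is shorter. One small correction to your narrative: in the $C^{(0,0)^-}$ chain it is not a raising derivative that ``produces $\tilde\Psi_3$''—the first derivative lands at $\bdiff=3$ with a $\tilde\sigma'$ or $\tilde\kappa'$ factor, and a subsequent lowering $\calD$ converts that factor to $\tilde\Psi_3$.
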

\begin{proof}
  Applying covariant derivatives on $C^{(-1,1)}$, there appear frame components
  consisting of $\tilde\Psi_3$ or its directional derivatives, spin coefficients
  and subsequently their directional derivatives. We already know $\tilde\Psi_3$
  is 1-dual-balanced for Einstein Walker metrics, therefore, $\tilde\Psi_3$ and
  its directional derivatives are of $\bdiff \ge 2$. Moreover, the only non-vanishing
  $\bdiff = 2$ term $\{\calD^k\calD'^k\}\tilde\Psi_3$ is just $\tilde\Psi_3$ when $k = 0$
  as can be shown commuting the innermost $\calD$ to the right and using $\calD\tilde\Psi_3 = 0$.

  Since $\tilde\sigma'$, $\tilde\kappa'$ are 1-dual-balanced and $\sigma$, $\kappa'$,
  $\tau$, $\tilde\alpha$, $\alpha'$, $\rho'$, $\varepsilon'$, $\tilde\varepsilon'$
  are $-1$-dual-balanced for Einstein Walker metrics with a constant self-dual
  part of the Weyl tensor, it is not possible to construct any term with negative
  boost weight difference from spin coefficients and moreover $\bdiff = 0$ terms
  involve only $\xi$.
  Hence, non-vanishing $\bdiff = 2$ components of $\nabla^{(k)} C^{(-1,1)}$
  are made of only $\tilde\Psi_3$ and any number of $\{\calD^l\calD'^{l-1}\}\xi$
  terms, where each such a term appears due to the application of $2l$ covariant
  derivatives. Then Lemma \ref{lemma:DD'mu} implies that $\bdiff = 2$ terms of
  $\nabla^{(2n)} C^{(-1,1)}$ can be written as $\tilde\Psi_3 \sum (\calD\xi)^n$.

  Similar arguments also apply to $C^{(0,0)^-}$. But now, first covariant derivative
  generates only $\bdiff = 3$ components $\tilde\Psi_2 \tilde\sigma'$ and
  $\tilde\Psi_2 \tilde\kappa'$, see \eref{eq:Einstein:nablaC(0,0)-}. Therefore,
  $\bdiff = 2$ components can be constructed only from $\{\calD^k\calD'^{k-1}\}\tilde\sigma'$
  or $\{\calD^l\calD'^{l-1}\}\tilde\kappa'$ and any number of $\bdiff = 0$ terms
  $\{\calD^m\calD'^{m-1}\}\xi$. As follows from Lemma \ref{lemma:DD'mu}, $\bdiff = 2$
  terms of $\nabla^{(2n)} C^{(0,0)^-}$ are thus of the form
  $\tilde\Psi_2 \calD\tilde\sigma' \sum (\calD\xi)^{n-1}$ or
  $\tilde\Psi_2 \calD\tilde\kappa' \sum (\calD\xi)^{n-1}$.
\end{proof}

\section*{References}
\bibliographystyle{plain}

\end{document}